\documentclass[12pt]{article}
\textheight22cm \textwidth16cm \hoffset-2cm \voffset-1.3cm
\parskip 2pt plus1pt minus1pt
\usepackage{amsmath,amsthm,amssymb}
\usepackage{array, longtable}
\newtheorem{Theorem}{Theorem}[section]
\newtheorem{lem}[Theorem]{Lemma}
\newtheorem{Remark}[Theorem]{Remark}
\newtheorem{Definition}[Theorem]{Definition}
\newtheorem{Corollary}[Theorem]{Corollary}

\newtheorem{Example}[Theorem]{Example}

\numberwithin{equation}{section}
\numberwithin{table}{section}

\begin{document}

\title{Generalized $b$-symbol weights of Linear Codes and $b$-symbol MDS Codes
\footnote{ E-Mail addresses:  hwliu@mail.ccnu.edu.cn (H. Liu), panxu@mails.ccnu.edu.cn (X. Pan)}}

\author{Hongwei Liu,~Xu Pan}
\date{\small
School of Mathematics and Statistics, Central China Normal University\\Wuhan, Hubei, 430079, China\\
}
\maketitle

\begin{abstract}
Generalized pair weights of linear codes are generalizations of minimum symbol-pair weights, which were introduced by Liu and Pan \cite{LP} recently. Generalized pair weights can be used to characterize the ability of protecting information in the symbol-pair read wire-tap channels of type II. In this paper, we introduce the notion of generalized $b$-symbol weights of linear codes over finite fields, which is a generalization of generalized Hamming weights and generalized pair weights. We obtain some basic properties and bounds of generalized $b$-symbol weights which are called Singleton-like bounds for generalized $b$-symbol weights. As examples, we calculate generalized weight matrices for simplex codes and Hamming codes.
We provide a necessary and sufficient condition for a linear code to be a $b$-symbol MDS code by using the generator matrix and the parity check matrix of this linear code.
Finally, a necessary and sufficient condition of a linear isomorphism preserving  $b$-symbol weights between two linear codes is obtained. As a corollary, we get the classical MacWilliams extension theorem when $b=1$.

\medskip
\textbf{Keywords}: generalized $b$-symbol weights, $b$-symbol MDS codes,  linear isomorphisms preserving $b$-symbol weights,  MacWilliams extension theorem.

\medskip
\textbf{2020 Mathematics Subject Classification:}~94B05,  11T71.
\end{abstract}

\section{Introduction}
%Let $\mathbb{F}_q$ be the finite field of order $q$, where $q=p^e$ and $p$ is a prime. An $[n,k]$-linear code $C$ of length $n$ over $\mathbb{F}_q$ is a $ \mathbb{F}_q$-subspace of dimension $k$ of $ \mathbb{F}_q^{n}$.

In 2011, motivated by the limitations of the reading process in high density data storage systems,  Cassuto and Blaum \cite{CB} introduced a new metric framework, named symbol-pair distance, to protect against pair errors in symbol-pair read channels, where the outputs are overlapping pairs of symbols.
They \cite{CB} also established relationships between the minimum Hamming distance and the minimum pair distance of the code, and obtained lower and upper bounds on the code sizes by using symbol-pair distance.
In \cite{C}, the authors established a Singleton-like bound for symbol-pair codes and constructed MDS symbol-pair codes (meeting this Singleton-like bound).
Several works have been done on the constructions of MDS symbol-pair codes (see, for example, \cite{CLL}, \cite{CJK}, \cite{D}, \cite{KZL}, \cite{KZZLC}, \cite{LG} \cite{ML1} and \cite{ML2}).
In \cite{DNSS} and \cite{DWLS}, the authors calculated the symbol-pair distances of repeated-root constacyclic codes of lengths $p^s$ and $2p^s$, respectively. In 2016, Yaakobi, Bruck and Siegel \cite{YBS} generalized the notion of symbol-pair weight to $b$-symbol weight. Yang, Li and Feng \cite{YL} showed the Plotkin-like bound for the $b$-symbol weight and presented a construction on irreducible cyclic codes and constacyclic codes meeting the Plotkin-like bound.

On the other hand, the notion of generalized Hamming weights appeared in the 1970's and has become an important research object in coding theory after Wei's work \cite{W} in 1991. Wei \cite{W} showed that the generalized Hamming weight hierarchy of linear codes has a close connection with cryptography.
Since then, lots of works have been done in computing and describing the generalized Hamming weight hierarchies of certain linear codes (see, for example, \cite{B},  \cite{JFW}, \cite{TV} and \cite{YF}).
In \cite{LP}, we introduced the notion of generalized pair weights of linear codes, which is a generalization of minimum symbol-pair weights of linear codes, we obtained some bounds for generalized pair weights and gave an application of generalized pair weights of linear codes to symbol-pair read wire-tap channels of type II.

%Hence generalized Hamming weights and generalized pair weights of linear codes are promising in coding theory and cryptography.

It is well-known that the  MacWilliams extension theorem plays a central role in coding theory.
MacWilliams \cite{M} and later Bogart, Goldberg, and Gordon
\cite{BGG} proved that, every linear isomorphism preserving Hamming weights between two linear codes over finite fields can be induced by a monomial matrix. It is interesting to ask how about the behavior of every linear isomorphism preserving the $b$-symbol weights between two linear codes.  Unfortunately, we found that a linear isomorphism induced by a permutation matrix may not preserve the $b$-symbol weight between two linear codes. In \cite{LP}, the authors  provided a necessary and sufficient condition for a linear isomorphism preserving pair weights between two linear codes.

 In 2018, Ding, Zhang and Ge \cite{DZ} established the Singleton-like bound $$d_{b}(C)\leq n+b-k$$ for an $[n,k]$-linear code $C$, where $d_{b}(C)$ is the minimum $b$-symbol weight of $C$ defined in Section 2. Since no linear code exists that reaches this bound in \cite{DZ} when $b>k$, we give an improvement for the Singleton-like bound $$d_{b}(C)\leq  \min\{n+b-k,n\}$$ in Theorem \ref{pair monotonicity}, which is a small part of Singleton-like bound for generalized weight matrices of linear codes in Theorem \ref{GBW}.
 Linear codes meeting this bound are called $b$-symbol MDS codes in this paper (see Def.~\ref{bmds}).
 Then we show that the length of $b$-symbol MDS codes is as large as possible when $b\ge k$ in Example \ref{4.4}, which is different from MDS conjecture that the length of $1$-symbol MDS codes is less than or equal to $q+1$ or $q+2$ (some special cases). Let $n_{r,k}$ denote the number of all $r$-dimensional subspaces of a $k$-dimensional vector space. It is interesting that when we study the $b$-symbol weights of linear codes and $b$-symbol MDS codes, we found that the length $n\leq n_{1,b+1}$ for any $[n,k]$-linear $b$-symbol MDS code over $\mathbb{F}_{q}$ if $b=k-1$ in Corollary~\ref{5.3c}. And MDS conjecture is that if $b=1$ then $n\leq n_{1,b+1}$ for any nontrivial $[n,k]$-linear $b$-symbol MDS code over $\mathbb{F}_{q}$ except $q$ is even and $k=3$ or $k=q-1$.
Hence it is bold to conjecture that $$n\leq n_{1,b+1}$$ for any $[n,k]$-linear $b$-symbol MDS code over $\mathbb{F}_{q}$ except some special cases, for example $q$ is even and $k=3$ or $k=q-1$.

In this paper, we unify the works of \cite{LP} and \cite{W} to introduce generalized $b$-symbol weights of linear codes for $1\leq b \leq n$. We define the generalized weight matrix $D(C)$ of a linear code $C$ in Section 2. The parameters about generalized $b$-symbol weights of the linear code $C$ for $1\leq b\leq n$ can be obtained from the generalized weight matrix $D(C)$.
%And the elements $d_i^{\,j}(C)$ of the matrix $D(C)$ satisfy certain rules. For example, every row of the matrix $D(C)$ is increasing from left to right, and every column of the matrix $D(C)$ is increasing from up to down.
The results about the generalized weight matrix $D(C)$ is proved in Theorem \ref{GBW}. And we calculate the generalized weight matrices $D(C)$ of simplex codes and two especial Hamming codes in Section 4.
In Section 6, we provide a necessary and sufficient condition of a linear isomorphism preserving $b$-symbol weights between two linear codes. As a corollary, when $b=1$, we obtain the classical MacWilliams extension theorem. Then we provide an algorithm to determine whether an isomorphism between two linear codes preserves $b$-symbol weights by using this theorem. And we explain why this algorithm is more efficiently than simply checking $b$-symbol weights of all the codewords of two codes in Remark \ref{5.6}.

%The MDS conjecture is that if there is a nontrivial $[n,k]$-linear $1$-MDS code over $\mathbb{F}_{q}$, then $n\leq q+1$, except when $q$ is even and $k=3$ or $k=q-1$ in which case $n\leq q+2$.Let $C$ be an $[n,k]$-linear $1$-MDS code over $\mathbb{F}_q$.The MDS conjecture has been proved in certain other cases, an instance of which is the following.In \cite{RL}, Roth and Lempel proved that if $q$ is odd and $2\leq k\leq \frac{\sqrt{q}+13}{4}$, then $n\leq q+1$.In \cite{ST}, Storme and Thas proved that if $q$ is even and $5\leq k\leq \frac{2\sqrt{q}+15}{4}$, then $n\leq q+1$.Ball and De Beule proved that if $k\leq 2p-2$, then $n\leq q+1$ in \cite{B1} and \cite{B2}. For a complete list of when the conjecture is known to hold for $q$ non-prime, see \cite{HT}.

%Whether this conjecture is right or wrong, it is a lot works to verify all the $b$-MDS codes. So we give a reduction theorem (Theorem~\ref{4.4}) guaranteeing that we only need to verify some special $b$-MDS codes instead of all the $b$-MDS codes.

This paper is organized as follows.
Section 2 provides some preliminaries. We introduce the notion of generalized $b$-symbol weights of linear codes and give a characterization of the $b$-symbol weight of an arbitrary subspace of linear codes.
In Section~3, we give a relationship between generalized Hamming weights and generalized $b$-symbol weights of linear codes and obtain a Singleton-like bound for generalized $b$-symbol weights.
As examples, we calculate the generalized weight matrix $D(C)$ in Section 4, when $C$ is the simplex code or two especial Hamming codes.
 In Section~5, we provide a necessary and sufficient condition for a linear code to be a $b$-symbol MDS code by using generator matrix and parity check matrix of this linear code.
 In Section~6, we study linear isomorphisms preserving $b$-symbol weights of linear codes, and obtain a necessary and sufficient condition of a linear isomorphism preserving $b$-symbol weights.
%In Section~6, a reduction theorem for MDS conjecture is provided.

\section{Preliminaries}

Throughout this paper, let $\mathbb{F}_q$ be the finite field of order $q$, where $q=p^e$ and $p$ is a prime. And let $\mathbb{N}=\{0, 1,2,\cdots\}$ be the set of all natural numbers and $\mathbb{N}^+=\mathbb{N}\setminus\{0\}$.  A $\mathbb{F}_q$-subspace $C$ of dimension $k$ of $\mathbb{F}_q^{n}$ is called an $[n,k]$-linear code for $k\leq n\in \mathbb{N}^+$. The dual code $C^{\perp}$ of $C$ is defined as
$$
C^{\perp}=\{{\bf x}\in \mathbb{F}_q^n \,| \,  {\bf c}\cdot{\bf x} =0, \forall \, {\bf c}\in C\},
$$
where $ ``-\cdot-" $ is the standard Euclidean inner product.

For $n,b\in \mathbb{N}^+$, we always assume $ 1\leq b \leq n$ in this paper.
%Let $\pi_b: \mathbb{F}_q^{n}\rightarrow(\mathbb{F}_q^b)^{n}$ be the map which is defined as
%$$\pi_b(\mathbf{x})=((x_{0},x_{1},\cdots,x_{b-1}),(x_{1},x_{2},\cdots,x_{b}),\cdots,(x_{n-1},x_{n},\cdots,x_{n+b-2}))$$ for any $\mathbf{x}=(x_{0}, x_1, \cdots,x_{n-1}) \in \mathbb{F}_{q}^{n}$ where the indices are taken modulo $n$.

\begin{Definition}\label{AA}(\cite{YBS})
For any $\mathbf{x},\mathbf{y} \in \mathbb{F}_{q}^{n}$, the  $b$-symbol distance between $\mathbf{x}$ and $\mathbf{y}$ is defined as
$$d_{b}(\mathbf{x},\mathbf{y})=|\{0\leq i \leq n-1|(x_{i},x_{i+1},\cdots,x_{i+b-1})\neq(y_{i},y_{i+1},\cdots,y_{i+b-1})\}|,$$ where the indices are taken modulo $n$. The $b$-symbol weight of $\mathbf{x}$ is defined as $w_{b}(\mathbf{x})=d_{b}(\mathbf{x},\mathbf{0}).$
\end{Definition}

\begin{Definition}\label{b weight}
Let $D$ be an $ \mathbb{F}_q$-subspace of $ \mathbb{F}_q^{n}$, the $b$-symbol support of $D$, denoted by $\chi_{b}(D)$, is $$\chi_{b}(D)=\{0\leq i \leq n-1\,|\,\exists\, \mathbf{x}=(x_{0},\cdots,x_{n-1})\in D, (x_{i},x_{i+1},\cdots,x_{i+b-1})\neq(0,0,\cdots,0)\},$$  where the indices are taken modulo $n$. The $b$-symbol weight of $D$ is defined as $w_{b}(D)=|\chi_{b}(D)|$.

In particular, if $C$ is an $[n,k]$-linear code over $\mathbb{F}_q$, the {\it minimum $b$-symbol weight} of $C$ is defined as
 $$d_{b}(C)=w_b(C)=\min_{\mathbf{c}\ne\mathbf{c'} \in C}\,d_{b}(\mathbf{c},\mathbf{c}')=\min_{{\bf 0}\ne \mathbf{c}\in C}\,w_{b}(\mathbf{c}).$$

\end{Definition}

We denote by $\langle S\rangle$ the $\mathbb{F}_q$-subspace generated by the subset $S\subseteq \mathbb{F}_q^n$.
By Definition~\ref{b weight}, we know that $w_{b}(\mathbf{x})=w_{b}(\langle\mathbf{x}\rangle)$ for any $\mathbf{x}\in \mathbb{F}_q^{n}$. For convenience, we denote $\chi_{b}(\mathbf{x})=\chi_{b}(\langle\mathbf{x}\rangle)$ for any $\mathbf{x}\in \mathbb{F}_q^{n}$.

\begin{Definition}
Let $C$ be an $[n,k]$-linear code over $\mathbb{F}_q$. For $1\leq r \leq k$, the $r$th generalized $b$-symbol weight of $C$ is defined as $d_{b}^{\,r}(C)=min\{w_{b}(D)\,|\,D\leq C, \dim(D)=r\}$.
And the sequence  $$d_{b}^{\,1}(C),\,d_{b}^{\,2}(C),\cdots,d_{b}^{\,k}(C)$$ is called the hierarchy of  generalized $b$-symbol weights of $C$.
\end{Definition}

\begin{Remark}
When $b=1$, the $r$th generalized $1$-symbol weight $d_{1}^{\,r}(C)$ of $C$ is the $r$th generalized Hamming weight of an $[n,k]$-linear code $C$ over $\mathbb{F}_q$  for $1\leq r \leq k$ defined by Wei \cite{W}. Also we know that $d_{1}^{\,1}(C)$ and $d_{2}^{\,1}(C)$ are the minimum Hamming weight and the minimum pair weight of a linear code $C$, respectively.

When $b=n$, $w_n(\mathbf{c})=n$ for any nonzero $\mathbf{c}\in C$ and $d_{n}^{\,r}(C)=n$ for any $1\leq r\leq k$.
If $D$ is an $\mathbb{F}_q$-subspace of $C$ with $ \dim(D)\ge 1$, we have $d_{b}^{\,r}(C)\leq d_{b}^{\,r}(D)$ for any $1\leq r\leq \dim(D)$.
 \end{Remark}

For convenience, we let $d_{b}(C)=d_{b}^{\,1}(C)$ for any linear code $C$ when $r=1$. Since we want to study all parameters about generalized $b$-symbol weights of the linear code $C$ for $1\leq b\leq n$, we introduce the following definition.

\begin{Definition}
For an $[n,k]$-linear code $C$ over $\mathbb{F}_q$, we define an $n\times k$ matrix $D(C)$ over the field  of real numbers as follows: $$D(C)=(d_b^{\,r}(C))_{n\times k}=\left(\begin{array}{cccc}
                    d_1^{\,1}(C) & d_1^{\,2}(C)&\cdots&d_1^{\,k}(C)\\
        d_2^{\,1}(C) & d_2^{\,2}(C)&\cdots&d_2^{\,k}(C)\\
        \vdots & \vdots&\vdots&\vdots\\
        d_n^{\,1}(C) & d_n^{\,2}(C)&\cdots&d_n^{\,k}(C)
                      \end{array}\right)_{n\times k},$$
where $d_b^{\,r}(C)$ is $r$th generalized $b$-symbol weight of $C$,
for $1\leq b\leq n$ and $1\leq r\leq k$. The matrix $D(C)$ is called the {\it generalized weight matrix} of a linear code $C$.
\end{Definition}

 We note that the elements $d_b^{\,r}(C)$ of the matrix $D(C)$ satisfy some certain rules. For example, every row of the matrix $D(C)$ is increasing from left to right, and every column of the matrix $D(C)$ is increasing from up to down. The properties of the generalized weight matrix $D(C)$ will be provided in Theorem \ref{GBW}.

%\begin{lem} \label{monotonicity}(\cite[Theorem 1]{W})
%Let $C$ be an $[n,k]$-linear code over $\mathbb{F}_q$. Then we have $$1\leq d_{H}^{\,1}(C)<d_{H}^{\,2}(C)<\cdots<d_{H}^{\,k-1}(C) < d_{H}^{\,k}(C)\leq n.$$
%\end{lem}

%We are going to give a characterization of the pair weight of an element in a linear code $C$ used in Section 4 and Section 5. Before we do that, we need introduce some notations.

 Let $U$ be an $\mathbb{F}_q$-vector space of dimension $k$, we denote by $U/W$ the quotient space modulo $W$, where $W$ is an $\mathbb{F}_q$-subspace of $U$.
For any $ r, k\in \mathbb{N}$, let
\begin{align*}
   {\rm PG}^{r}(U)=\{V\leq U\,|\,\dim(V)=r\}&,  & {\rm  PG}^{\leq r}(U)=\{V\leq U\,|\,\dim(V)\leq r\}.
\end{align*}
It is trivial that $\dim(\{\mathbf{0}\})=0$ and ${\rm PG}^{0}(U)=\{\mathbf{0}\}$.
Let $n_{r,k}$ denote the number of all $r$-dimensional subspaces of a $k$-dimensional vector space. It is easy to see that
$$n_{r, k}= \left\{ \begin{array}{ll}
1,  & \textrm{if $r=0\ ;$}\\

\prod\limits_{i=0}^{r-1}\frac{q^{k}-q^{i}}{q^{r}-q^{i}},  & \textrm{if $1\leq r \leq k ;$}\\

0,  & \textrm{if $r> k .$}
\end{array} \right.
$$
Let $C$ be an $[n,k]$-linear code with a generator matrix $G=(G_{0},\cdots,G_{n-1})$,  where $G_i$ is the column vector of $G$ for $0\leq i\leq n-1$. For any $V\in {\rm PG}^{\leq b}(\mathbb{F}_q^{k})$, the function $m^b_{G}: {\rm PG}^{\leq b}(\mathbb{F}_q^{k}) \to \mathbb{N}$ is defined as follows:
$$
m^b_{G}(V)=|\{0\leq i\leq n-1\,|\,\langle G_{i},G_{i+1},\cdots,G_{i+b-1}\rangle= V\}|,
$$  where the indices are taken modulo $n$.
By using the function $m^b_{G}$, we define the function $\theta^b_{G}: {\rm PG}^{\leq k}(\mathbb{F}_q^{k}) \to \mathbb{N}$ to be $$\theta^b_{G}(U)=\sum_{V\in {\rm PG}^{\leq b}(U)}m^b_{G}(V)$$ for any $U\in {\rm PG}^{\leq k}(\mathbb{F}_q^{k})$.

For an $[n,k]$-linear code $C$ over $\mathbb{F}_q$ with a generator matrix $G$, we know that for any $1\leq r \leq k $ and an $\mathbb{F}_q$-subspace $D$ of dimension $r$ of $C$, there exists a unique $\mathbb{F}_q$-subspace $\tilde{D}$ of dimension $r$ of $\mathbb{F}_q^k$ such that $D=\tilde{D}G=\{{\bf y}G\,|\,{\bf y}\in \tilde{D} \}$.
Also we know that for any nonzero codeword ${\bf c}\in C$, there exists a unique nonzero vector ${\bf y}\in \mathbb{F}_q^k$ such that ${\bf c}={\bf y}G=({\bf y}G_0,{\bf y}G_1,\cdots, {\bf y}G_{n-1})$, where $G=(G_0,\cdots, G_{n-1})$.

\begin{lem} \label{r pair weight}
Assume the notations given above. Then $w_{b}(D)= n-\theta^b_{G}(\tilde{D}^{\bot})$ for any subspace $D$ of $C$. In particular $w_{b}(\mathbf{c})= n-\theta^b_{G}(\langle \mathbf{y}\rangle^{\bot})$ for any $0\neq\mathbf{c}\in C$.
\end{lem}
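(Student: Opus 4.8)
The plan is to unwind both sides of the claimed identity into a count of coordinate positions and then match them. The convenient quantity is the complement count $n-w_{b}(D)$, i.e. the number of indices $i$ that do \emph{not} lie in the $b$-support $\chi_{b}(D)$, since the right-hand side $\theta_{G}(\tilde{D}^{\bot})$ is manifestly a sum of nonnegative multiplicities $m_{G}(V)$ and should count exactly these missing positions.

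First I would use the bijection $\mathbf{y}\mapsto \mathbf{y}G$ from $\tilde{D}$ onto $D$, so that every codeword of $D$ is written as $\mathbf{x}=\mathbf{y}G$ with $\mathbf{y}\in\tilde{D}$ and its $j$-th coordinate is $x_{j}=\mathbf{y}G_{j}$. With $G=(G_{0},\cdots,G_{n-1})$, the length-$b$ window of $\mathbf{x}$ starting at $i$ is $(\mathbf{y}G_{i},\cdots,\mathbf{y}G_{i+b-1})$, indices taken modulo $n$. By Definition~\ref{b weight}, an index $i$ fails to lie in $\chi_{b}(D)$ precisely when this window vanishes for \emph{every} $\mathbf{y}\in\tilde{D}$.

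The key step is to translate this vanishing into a statement of linear algebra. For a fixed $\mathbf{y}$ the window vanishes iff $\mathbf{y}\cdot G_{i+t}=0$ for all $0\leq t\leq b-1$; requiring this for all $\mathbf{y}\in\tilde{D}$ is exactly the condition $G_{i+t}\in\tilde{D}^{\bot}$ for every $t$, which is equivalent to $\langle G_{i},\cdots,G_{i+b-1}\rangle\subseteq\tilde{D}^{\bot}$. Since $\langle G_{i},\cdots,G_{i+b-1}\rangle$ is spanned by $b$ vectors its dimension is at most $b$, so this containment says precisely that $\langle G_{i},\cdots,G_{i+b-1}\rangle\in PG^{\leq b}(\tilde{D}^{\bot})$.

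It then remains to count. I would partition the positions $i$ with $\langle G_{i},\cdots,G_{i+b-1}\rangle\subseteq\tilde{D}^{\bot}$ according to the actual subspace $V=\langle G_{i},\cdots,G_{i+b-1}\rangle$. By the definition of $m_{G}$, the block of positions yielding a given $V$ has size $m_{G}(V)$, so summing over $V\in PG^{\leq b}(\tilde{D}^{\bot})$ gives
$$n-w_{b}(D)=\sum_{V\in PG^{\leq b}(\tilde{D}^{\bot})}m_{G}(V)=\theta_{G}(\tilde{D}^{\bot}),$$
which is the desired identity after rearranging. The special case for a nonzero codeword $\mathbf{c}=\mathbf{y}G$ follows at once by taking $D=\langle\mathbf{c}\rangle$, so that $\tilde{D}=\langle\mathbf{y}\rangle$ and $w_{b}(\mathbf{c})=w_{b}(\langle\mathbf{c}\rangle)=n-\theta_{G}(\langle\mathbf{y}\rangle^{\bot})$. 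The only point demanding care is the interchange of quantifiers in the third step: a position leaves the $b$-support, which is an existential condition over $D$, exactly when the window columns lie in $\tilde{D}^{\bot}$, a universal condition over $\tilde{D}$. This becomes transparent once codewords are parametrized by $\tilde{D}$ via the above bijection, so I expect no serious obstacle.
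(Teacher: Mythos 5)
Your proposal is correct and follows essentially the same route as the paper's proof: both pass to the complement count, use the parametrization $\mathbf{x}=\mathbf{y}G$ with $\mathbf{y}\in\tilde{D}$ to turn the vanishing of length-$b$ windows into the containment $\langle G_{i},\cdots,G_{i+b-1}\rangle\subseteq\tilde{D}^{\bot}$, and then partition the qualifying positions by the subspace $V$ they span to obtain $\sum_{V\in PG^{\leq b}(\tilde{D}^{\bot})}m_{G}(V)=\theta_{G}(\tilde{D}^{\bot})$. The only differences are cosmetic: you make explicit the quantifier interchange and the dimension bound $\dim\langle G_{i},\cdots,G_{i+b-1}\rangle\leq b$, which the paper leaves implicit.
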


\begin{proof}
By the definition of $w_{b}$ and the function $m_{G}^{b}$, we have
\begin{align*}
  w_{b}(D) & =|\{0\leq i \leq n-1\,|\,\exists\,\mathbf{c}=(c_{0},c_{1},\cdots,c_{n-1})\in D,\,(c_{i},c_{i+1},\cdots,c_{i+b-1})\neq (0,0,\cdots,0)\}| \\
   & =n-|\{0\leq i \leq n-1\,|\,\forall\,\mathbf{c}=(c_{0},c_{1},\cdots,c_{n-1})\in D,\,(c_{i},c_{i+1},\cdots,c_{i+b-1})= (0,0,\cdots,0)\}| \\
   &  =n-|\{0\leq i \leq n-1\,|\,\forall\,\mathbf{y}\in \tilde{D},\,\forall\, i\leq j \leq i+b-1, \,\mathbf{y}G_{j}=0\,\,\}|\\
    &  =n-|\{0\leq i \leq n-1\,|\,\langle G_{i},G_{i+1},\cdots,G_{i+b-1}\rangle\subseteq\tilde{D}^{\bot}\}|  \\
    &=n-\sum_{V\in {\rm PG}^{\leq b}(\tilde{D}^{\bot})}|\{0\leq i \leq n-1\,|\,\langle G_{i},G_{i+1},\cdots,G_{i+b-1}\rangle= V\}|\\
&=n-\sum_{V\in {\rm PG}^{\leq b}(\tilde{D}^{\bot})}m^b_{G}(V)\\
    &=n-\theta^b_{G}(\tilde{D}^{\bot}).
\end{align*}
\end{proof}
%\begin{Definition}\label{CC}
%Let $C$ be an $[n,k]$-linear code over $\mathbb{F}_q$ and $1\leq r\leq k-1$, we say that $C$ is a $(b,r)$-pair equiweight code if $d_{b}^{\,r}(C)=w_{b}(D)$ for any subspace $D$ of dimension $r$ of $C$.
%\end{Definition}

%\begin{Remark}
%If $b=r=1$, the $(1,1)$-pair equiweight code is the Hamming equiweight code as usual. And If $b-1=r=1$, the $(2,1)$-pair equiweight code is the pair equiweight code \cite{LP}.
% \end{Remark}

Assume $\mathbb{Z}_n=\{0,1,\cdots,n-1\}$.
When we study the $b$-supports $\chi_{b}(C)$ and generalized $b$-symbol weights of linear codes $C$, we can view $\chi_{b}(C)$ as a subset of $\mathbb{Z}_n$ and we need the following definition.
\begin{Definition}\label{hole}
For any subset $J$ of $\mathbb{Z}_n=\{0,1,\cdots,n-1\}$, a {\it hole} $H$ of $J$ is defined as a nonempty set such that $H=\{a_0+1,a_0+2,\cdots,a_0+|H|\}\subseteq \mathbb{Z}_n\backslash J$ and $a_0,a_{0}+|H|+1\in J$. And we denote the set of all the holes of $J$ by $\mathbb{H}(J)$.
\end{Definition}

We say $J$ is a {\it successive subset} of $\mathbb{Z}_n$, if $|\mathbb{H}(J)|\leq 1$.

%\begin{Definition}[3.8 of \cite{H}]
%Let $C$ be an $[n,k]$-linear code over $\mathbb{F}_{q^m}$, where $m\in \mathbb{N}^{+}$. The {\it subfield subcode} $C|_{\mathbb{F}_{q}}$ of $C$ with respect to $\mathbb{F}_{q}$ is the set of codewords in $C$ each of whose components is in $\mathbb{F}_{q}$.
%\end{Definition}

%Then we give the notion of the essential number of linear codes by using the definition of the subfield subcode of linear codes.

%\begin{Definition}\label{ess}
%Let $C$ be an $[n,k]$-linear code over $\mathbb{F}_{q}$ where $q=p^e$. The essential number $ess(C)$ of $C$ is defined by $$ess(C)=\min\{\,m\in\mathbb{N}^+\,|\,\, m\,|\, e\,\,and \,\, \dim_{\mathbb{F}_{p^m}}(C|_{\mathbb{F}_{p^m}})=k \}.$$
%\end{Definition}

\section{Generalized $b$-symbol weights of linear codes}
In this section, we give some general properties of generalized  $b$-symbol weights of linear codes. The following lemma gives a description on the relationship between the (Hamming) 1-symbol weight $w_{1}(D)$ and the $b$-symbol weight $w_{b}(D)$ for any $\mathbb{F}_q$-subspace $D$ of $\mathbb{F}_q^{n}$.

\begin{lem} \label{relationship-1}
Assume the notations given above. Let $D$ be an $\mathbb{F}_q$-subspace of $\mathbb{F}_q^{n}$. Then
$$w_{b}(D)=w_{1}(D)+\sum_{H\in \mathbb{H}(\chi_1(D)),\,|H|\leq b-1}|H|+\sum_{H\in \mathbb{H}(\chi_1(D)),\,|H|\ge b}(b-1),$$ where $\mathbb{H}(\chi_1(D))$ is the set of all the holes of $\chi_1(D)$.
\end{lem}

\begin{proof}
If $i\in\chi_1(D)$, there exists $\mathbf{x}=(x_{0},x_1,\cdots,x_{n-1})\in D$ such that $x_{i}\neq 0$. Then we know
$$(x_{i-b+1},x_{i-b+2},\cdots,x_{i}),(x_{i-b+2},x_{i-b+3},\cdots,x_{i+1}),\cdots,(x_{i},x_{i+1},\cdots,x_{i+b-1})$$ are not $\mathbf{0}$. Hence $i-b+1,i-b+2,\cdots,i\in \chi_b(D)$.

If $H=\{a_0+1,a_0+2,\cdots,a_0+|H|\}$ is an element of $\mathbb{H}(\chi_1(D))$ and $|H|\leq b-1$, we have $H\subseteq\chi_b(D)$ since $$a_0+|H|+1\in \chi_1(D).$$

If $H=\{a_0+1,a_0+2,\cdots,a_0+|H|\}$ is an element of $\mathbb{H}(\chi_1(D))$ and $|H|\ge b$, we have $$\{a_0+|H|-b+2,a_0+|H|-b+3,\cdots,a_0+|H|\}\subseteq\chi_b(D)$$ and $\{a_0+1,a_0+2,\cdots,a_0+|H|-b+1\}\subseteq\mathbb{Z}\backslash \chi_b(D)$.
 Hence $$w_{b}(D)=w_{1}(D)+\sum_{H\in \mathbb{H}(\chi_1(D)),\,|H|\leq b-1}|H|+\sum_{H\in \mathbb{H}(\chi_1(D)),\,|H|\ge b}(b-1).$$
\end{proof}

\begin{Theorem}\label{4.3}
Assume the notations given above. Let $C$ be an $[n,k]$-linear code over $\mathbb{F}_q$. For $1\leq r\leq k-1$, we have
$$\min\{d_{1}^{\,r}(C)+b-1,\,n\}\leq d_{b}^{\,r}(C)\leq \min\{bd_{1}^{\,r}(C),\,n\}$$

\end{Theorem}

\begin{proof}
%\begin{description}
If $d_{1}^{\,r}(C)+b-1\ge n$, then $$|\chi_1(D)|\ge d_{1}^{\,r}(C)\ge n-b+1$$
and $n-|\chi_1(D)|\leq b-1$ for any $\mathbb{F}_q$-subspace $D$ of dimension $r$ of $C$.
Then we have $|H|\leq b-1$ for any $H\in \mathbb{H}(\chi_1(D))$.
 By Lemma~\ref{relationship-1},$$w_{b}(D)=w_{1}(D)+\sum_{H\in \mathbb{H}(\chi_1(D)),\,|H|\leq b-1}|H|=n$$ for any $\mathbb{F}_q$-subspace $D$ of dimension $r$ of $C$. Hence $\min\{d_{1}^{\,r}(C)+b-1,\,n\}\leq d_{b}^{\,r}(C)=n$.

 If $d_{1}^{\,r}(C)+b-1<n$.
There exists an $\mathbb{F}_q$-subspace $E$ of $C$ such that $\dim(E)=r$ and $w_{b}(E)=d_{b}^{\,r}(C)$.
By Lemma~\ref{relationship-1}, we have $$w_{b}(E)=w_{1}(E)+\sum_{H\in \mathbb{H}(\chi_1(E)),\,|H|\leq b-1}|H|+\sum_{H\in \mathbb{H}(\chi_1(E)),\,|H|\ge b}(b-1).$$

If $|H|\leq b-1$ for any $H\in \mathbb{H}(\chi_1(E))$, then $d_{b}^{\,r}(C)=w_{b}(E) =n>d_{1}^{\,r}(C)+b-1$.

If there exists $H\in \mathbb{H}(\chi_1(E))$ such that $|H|\ge b$, then $$d_{b}^{\,r}(C)=w_{b}(E) \ge w_{1}(E)+\sum_{H\in \mathbb{H}(\chi_1(E)),\,|H|\ge b}(b-1)\ge w_{1}(E)+b-1 \ge d_{1}^{\,r}(C)+b-1.$$
Hence, we get $\min\{d_{1}^{\,r}(C)+b-1,\,n\}\leq d_{b}^{\,r}(C)$.

Now we prove that $d_{b}^{\,r}(C)\leq bd_{1}^{\,r}(C)$.
There exists an $\mathbb{F}_q$-subspace $D$ of $C$ such that $\dim(D)=r$ and $w_{1}(D)=d_{1}^{\,r}(C)$.
 By Lemma~\ref{relationship-1}, we have
 \begin{align*}
   w_{b}(D) & =w_{1}(D)+\sum_{H\in \mathbb{H}(\chi_1(D)),\,|H|\leq b-1}|H|+\sum_{H\in \mathbb{H}(\chi_1(D)),\,|H|\ge b}(b-1) \\
    & \leq w_{1}(D)+(b-1)|\mathbb{H}(\chi_1(D))|\\
    &\leq bw_{1}(D).
 \end{align*}
Hence $ d_{b}^{\,r}(C)\leq w_{b}(D)\leq bw_{1}(D)=bd_{1}^{\,r}(C)$.

\end{proof}

\begin{Remark}
When $r=1$, the statement (b) of Theorem~\ref{4.3} was proved in Proposition 3 of \cite{YBS}.
When $r=2$, the statement (b) of Theorem~\ref{4.3} was proved in Theorem 3.2 of \cite{LP}.
\end{Remark}

For any $[n,k]$-linear code $C$ over $\mathbb{F}_q$, it is easy to know that $$b\leq d_{b}^{\,1}(C)\leq d_{b}^{\,2}(C)\leq \cdots\leq d_{b}^{\,k-1}(C) \leq d_{b}^{\,k}(C)\leq n.$$ We give an improvement of this inequalities in the next theorem.

\begin{Theorem}\label{pair monotonicity}
Assume the notations given above. Then
\begin{description}
             \item[(a)] For $1\leq r\leq k-1$, if $d_{b}^{\,r+1}(C)<n$, then $d_{b}^{\,r}(C)< d_{b}^{\,r+1}(C)$.
             \item[(b)] If $k>b$, then $b\leq d_{b}^{\,1}(C)< d_{b}^{\,2}(C)< \cdots< d_{b}^{\,k-b}(C)< d_{b}^{\,k-b+1}(C)\leq\cdots \leq d_{b}^{\,k}(C)\leq n$.
             \item[(c) (Singleton-like bound for generalized $b$-symbol weights)] For $1\le r\le k-1$,  $$d_{b}^{\,r}(C)\leq \min\{n-k+b+r-1,n\}.$$
\noindent In particular, when $r=1$, $d_{b}^{\,1}(C)\leq \min\{n-k+b,n\}$, which  is called the Singleton-like bound for $b$-symbol weights.
\end{description}

\end{Theorem}

\begin{proof}

(a) There exists an $\mathbb{F}_q$-subspace $E$ of $C$ such that $\dim(E)=r+1$ and $w_{b}(E)=d_{b}^{\,r+1}(C)$. Then there exists $$H=\{a_0+1,a_0+2,\cdots,a_0+|H|\}\in \mathbb{H}(\chi_1(E))$$ such that $|H|\ge b$ otherwise $n=w_{b}(E)=d_{b}^{\,r+1}(C)$ by Lemma~\ref{relationship-1}. Suppose $$\tilde{E}=\{\mathbf{x}=(x_{0},x_1,\cdots,x_{n-1})\in E\,|\,x_{a_0}=0\}.$$
Then we get $\tilde{E}<E$ and $\dim(\tilde{E})=r$ since $a_0\in \chi_1(E)$.
Then $\chi_1(\tilde{E})\subseteq\chi_1(E)\setminus\{a_0\}$ and $a_0\in \chi_b(E)\setminus\chi_b(\tilde{E})$ since $\{a_0+1,a_0+2,\cdots,a_0+|H|\}\subseteq \mathbb{Z}_n\setminus \chi_1(E)$ and $|H|\ge b$.
Hence $d_{b}^{\,r}(C)\leq w_b(\tilde{E})<w_b(E)=d_{b}^{\,r+1}(C)$.

(b) First we prove $d_{b}^{\,k-b}(C)<n$. Let $\mathfrak{H}(C)=\max\{\,|H|\,|\,H\in \mathbb{H}(\chi_1(C))\}$, we assume $\mathfrak{H}(C)=0$  when $ \mathbb{H}(\chi_1(C))=\emptyset$.

If $\mathfrak{H}(C)\ge b$, then $ d_{b}^{\,k-b}(C)\leq d_{b}^{\,k}(C)<n$ by the definition of $d_{b}^{\,k}(C)$ and Lemma~\ref{relationship-1}.

If $0\leq \mathfrak{H}(C)\leq b-1$, we claim $d_{b}^{\,k-(b-\mathfrak{H}(C))}(C)<n$. We prove that by induction on $t=b-\mathfrak{H}(C)$ where $1\leq t\leq b$.

 Suppose $t=1$ and $\mathfrak{H}(C)=b-1$. If $ \mathbb{H}(\chi_1(C))$ is not empty which means $b\ge 2$, then there exists an $H_1\in \mathbb{H}(\chi_1(C))$ such that $|H_1|=b-1$ and $$H_1=\{a_1+1,a_1+2,\cdots ,a_1+b-1\}.$$

 If $ \mathbb{H}(\chi_1(C))$ is empty which means $b=1$, we take any $a_1\in \mathbb{Z}_n$.
 Let $$C_1=\{\mathbf{c}=(c_0,c_1,\cdots,c_{n-1})\in C\,|\,c_{a_1}=0\}$$ whenever $ \mathbb{H}(\chi_1(C))$ is empty or not.
 Then $\chi_1(C_1)\subseteq \chi_1(C)\setminus\{a_1\}$ and $$\dim(C)=\dim(C_1)+1.$$
 Hence $a_1\in \chi_b(C)\setminus \chi_b(C_1)$ and $d_{b}^{\,k-1}(C)\leq w_b(C_1)<w_b(C)=d_{b}^{\,k}(C)\leq n$.

Now suppose $2\leq t \leq b$. If $ \mathbb{H}(\chi_1(C))$ is not empty which means $b\ge t+1$, then there exists an $H_2\in \mathbb{H}(\chi_1(C))$ such that $|H_2|=b-t$ and $$H_2=\{a_2+1,a_2+2,\cdots ,a_2+b-t\}. $$

If $ \mathbb{H}(\chi_1(C))$ is empty which means $b=t$, we take any $a_2\in \mathbb{Z}_n$.
Let $$C_2=\{\mathbf{c}=(c_0,c_1,\cdots,c_{n-1})\in C\,|\,c_{a_2}=0\}$$ whenever $ \mathbb{H}(\chi_1(C))$ is empty or not.
Then $$\chi_1(C_2)\subseteq \chi_1(C)\setminus\{a_2\}$$ and $\dim(C)=\dim(C_2)+1$.
Hence $\mathfrak{H}(C_2)\ge\mathfrak{H}(C)+1$.

If $\mathfrak{H}(C_2)\ge b$, then $$d_{b}^{\,k-(b-\mathfrak{H}(C))}(C)\leq d_{b}^{\,k-(b-\mathfrak{H}(C))}(C_2)<d_{b}^{\,k-1}(C_2)<n$$ where $b-\mathfrak{H}(C)=t\ge 2$.

If $\mathfrak{H}(C_2)\leq b-1$, then $1\leq b-\mathfrak{H}(C_2)<b-\mathfrak{H}(C)=t$ and $$d_{b}^{\,(k-1)-(b-\mathfrak{H}(C_2))}(C_2)<n.$$ by induction.
Therefore, we have  $$d_{b}^{\,k-(b-\mathfrak{H}(C))}(C)\leq d_{b}^{\,k-b-1+\mathfrak{H}(C_2)}(C)\leq d_{b}^{\,k-b-1+\mathfrak{H}(C_2)}(C_2)=d_{b}^{\,(k-1)-(b-\mathfrak{H}(C_2))}(C_2)<n.$$
This implies that  $d_{b}^{\,k-b}(C)\leq d_{b}^{\,k-(b-\mathfrak{H}(C))}(C)<n$ when $0\leq \mathfrak{H}(C)\leq b-1$. Hence $d_{b}^{\,k-b}(C)<n$ as we claimed.

By (a), we have $$b\leq d_{b}^{\,1}(C)< d_{b}^{\,2}(C)< \cdots< d_{b}^{\,k-b}(C)\leq d_{b}^{\,k-b+1}(C)\leq\cdots \leq d_{b}^{\,k}(C)\leq n.$$
Suppose $d_{b}^{\,k-b}(C)= d_{b}^{\,k-b+1}(C)<n$, then $d_{b}^{\,k-b}(C)<d_{b}^{\,k-b+1}(C)$, and  by (a) which is a contradiction. Hence $$b\leq d_{b}^{\,1}(C)< d_{b}^{\,2}(C)< \cdots< d_{b}^{\,k-b}(C)< d_{b}^{\,k-b+1}(C)\leq\cdots \leq d_{b}^{\,k}(C)\leq n.$$

(c) If $1\leq r \leq k-b$, then $$d_{b}^{\,r}(C)\leq d_{b}^{\,r+1}(C)-1\leq \cdots\leq  d_{b}^{\,k-b}(C)-(k-b-r)\leq n-1-(k-b-r)=n+b+r-k-1\leq n.$$
Since $d_{b}^{\,r}(C)\leq n \leq n+b+r-k-1$ when $k-b+1\leq r\leq k,$ we have $$d_{b}^{\,r}(C)\leq\min\{n+b+r-k-1, n\}$$ for $1\leq r \leq k.$

%(d) It is easy to prove by (c) when $i=1$.
\end{proof}

\begin{Remark}
When $b=1$, the statement (b) of Theorem~\ref{pair monotonicity} is $$1\leq d_{1}^{\,1}(C)<d_{1}^{\,2}(C)<\cdots<d_{1}^{\,k-1}(C) < d_{1}^{\,k}(C)\leq n$$ which was proved by Wei \cite{W}. When $b=2$, the statement (b) of Theorem~\ref{pair monotonicity} is $$2\leq d_{2}^{\,1}(C)<d_{2}^{\,2}(C)<\cdots<d_{2}^{\,k-1}(C) \leq d_{2}^{\,k}(C)\leq n$$ which was proved by Liu and Pan \cite{LP}.
\end{Remark}

From Theorem \ref{pair monotonicity}, we have the following definition.
\begin{Definition}\label{bmds}
An $[n,k]$-linear code $C$ over $\mathbb{F}_q$ with $d_{b}(C)= \min\{n-k+b,n\}$ is called a $b$-symbol maximum distance separable ($\,b$-symbol MDS) code.
\end{Definition}

\begin{Remark}
Theorem 2.4 of \cite{DZ} gives $d_{b}^{\,1}(C)\leq n-k+b$, but there is no $[n,k]$-linear code $C$ over $\mathbb{F}_q$ such that $d_{b}^{\,1}(C)= n-k+b$ when $k<b$. Hence our bound in statement (d) of Theorem~\ref{pair monotonicity} is an improvement of Theorem 2.4 of \cite{DZ}. Also there is an $[n,k]$-linear code $C$ over $\mathbb{F}_q$ such that $d_{b}^{\,1}(C)= \min\{n+b-k,n\} $ when $k<b$, for example $1$-MDS codes by using Theorem~\ref{GBW}.
\end{Remark}

%We say that $C$ is a trivial $1$-MDS code over $\mathbb{F}_{q}$ if and only if $C=\mathbb{F}_{q}^n$ or $C$ is monomially equivalent to the code generated by $\mathbf{1}$ or its dual.

%\begin{Conjecture}[Sec 7.4 of \cite{H}]
%If there is a nontrivial $[n,k]$ $1$-MDS code over $\mathbb{F}_{q}$, then $n\leq q+1$, except when $q$ is even and $k=3$ or $k=q-1$ in which case $n\leq q+2$.
%\end{Conjecture}

For any subset $J\subseteq \mathbb{Z}_n$, let $J[b]=\cup_{i=0}^{b-1}(J+i)$ and $J[-b]=\cup_{i=0}^{b-1}(J-i)$.

\begin{lem}\label{n8.1}
For any $\mathbb{F}_q$-subspace $D$ of $\mathbb{F}_q^n$, we have
\begin{description}
  \item[(a)] $j\in\mathbb{Z}_n\setminus \chi_b(D)$ if and only if $\{j\}[b]\subseteq\mathbb{Z}_n\setminus \chi_1(D).$
\item[(b)] $\chi_1(D)[-b]=\chi_b(D)$.
\item[(c)] For any $1\leq b\leq n-1$, $\chi_{b}(D)[-1]=\chi_{b+1}(D)$.
\end{description}
\end{lem}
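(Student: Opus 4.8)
The plan is to prove all three parts by directly unwinding Definition~\ref{b weight} of the $b$-support together with the definitions of the shift operators $J[b]$ and $J[-b]$. The single computational engine is the translation of the existential condition ``$\exists\,\mathbf{x}\in D$ with $(x_i,\dots,x_{i+b-1})\neq\mathbf{0}$'' into a statement about membership in the ordinary support $\chi_1(D)$; once this translation is in place, everything is set-theoretic bookkeeping modulo $n$, and no property of $D$ beyond the definition of its supports is required.

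For (a) I would unwind each side into a quantified statement. The condition $j\in\mathbb{Z}_n\setminus\chi_b(D)$ means, by Definition~\ref{b weight}, that for every $\mathbf{x}=(x_0,\dots,x_{n-1})\in D$ one has $x_j=x_{j+1}=\dots=x_{j+b-1}=0$. The condition $\{j\}[b]\subseteq\mathbb{Z}_n\setminus\chi_1(D)$ means that for every $\ell\in\{j,j+1,\dots,j+b-1\}$ and every $\mathbf{x}\in D$ one has $x_\ell=0$. These two statements differ only in the order of the two universal quantifiers (``for all $\mathbf{x}$, for all $\ell$'' versus ``for all $\ell$, for all $\mathbf{x}$''), hence are logically equivalent, which gives (a). The only point requiring care is that the window $\{j,\dots,j+b-1\}$, and therefore $\{j\}[b]$, is read cyclically modulo $n$, exactly as in Definition~\ref{b weight}.

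Part (b) is then the complement of (a). Negating (a) gives $j\in\chi_b(D)$ if and only if $\{j\}[b]\cap\chi_1(D)\neq\emptyset$, i.e. if and only if there is some $i\in\{0,\dots,b-1\}$ with $j+i\in\chi_1(D)$. Since $j+i\in\chi_1(D)$ is equivalent to $j\in\chi_1(D)-i$, this says precisely that $j\in\bigcup_{i=0}^{b-1}(\chi_1(D)-i)=\chi_1(D)[-b]$, proving $\chi_b(D)=\chi_1(D)[-b]$.

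Finally, for (c) I would feed (b) back into itself. Applying (b) with $b$ and with $b+1$ yields $\chi_b(D)=\chi_1(D)[-b]$ and $\chi_{b+1}(D)=\chi_1(D)[-(b+1)]$, so it suffices to compare the unions $\bigcup_{i=0}^{b-1}(\chi_1(D)-i)$ and $\bigcup_{i=0}^{b}(\chi_1(D)-i)$. Telescoping the extra term gives $\chi_{b+1}(D)=\chi_b(D)\cup(\chi_b(D)-1)$, that is, $j\in\chi_{b+1}(D)$ if and only if $j\in\chi_b(D)$ or $j+1\in\chi_b(D)$, which identifies $\chi_{b+1}(D)$ as $\chi_b(D)$ with one further backward-shift layer adjoined. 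I expect the main, and essentially only, obstacle to be clerical rather than conceptual: keeping the off-by-one ranges of the shift operators consistent across the three parts together with the cyclic reduction modulo $n$, since an error of one in an index range is exactly what separates $\chi_b(D)$ from $\chi_{b+1}(D)$.
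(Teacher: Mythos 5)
Your proposal is correct and takes essentially the same route as the paper: (a) by unwinding the two definitions (the paper simply calls this easy), (b) by combining (a) with the definition of $J[-b]$ (the paper splits this into two inclusions, one using (a), but the content is identical), and (c) by expressing both $\chi_b(D)$ and $\chi_{b+1}(D)$ via (b) and merging the unions of shifts, which is exactly the paper's computation. Your reading of $\chi_b(D)[-1]$ as $\chi_b(D)\cup(\chi_b(D)-1)$ is also the interpretation the paper's own proof of (c) uses, so the off-by-one concern you flag is correctly resolved.
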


\begin{proof}
(a) It is easy to prove that by definitions of $\chi_1(D)$ and $\chi_b(D)$.

(b) For any $j\in \chi_1(D)[-b]$, there exists $0\leq i \leq b-1$ such that $j\in \chi_1(D)-i$. Then $j+i\in \chi_1(D)$ and $j\in \chi_b(D)$ by (a). Hence $\chi_1(D)[-b]\subseteq\chi_b(D)$.

 For any $j_1\in \chi_b(D)$, there exists $\mathbf{x}=(x_0,x_1,\cdots,x_{n-1})\in D$ such that $$(x_{j_1},x_{j_1+1},\cdots,x_{j_1+b-1})\neq \mathbf{0}.$$
Then there exists $0\leq i_1 \leq b-1$ such that $x_{j_1+i_1}\neq 0$ and $j_1+i_1\in \chi_1(D)$.
Hence $j_1\in \chi_1(D)[-b]$ and $\chi_1(D)[-b]\supseteq\chi_b(D)$.

(c) By (b), we know
\begin{align*}
                  \chi_{b}(D)[-1] & =\chi_{b}(D)\bigcup (\chi_{b}(D)-1) \\ &=\chi_1(D)[-b]\bigcup(\chi_1(D)[-b]-1)\\
                   &=\bigcup_{i=0}^{b-1}(\chi_1(D)-i)\bigcup(\bigcup_{i=0}^{b-1}(\chi_1(D)-i)-1)\\ &=\bigcup_{i=0}^{b-1}(\chi_1(D)-i)\bigcup(\bigcup_{i=1}^{b}(\chi_1(D)-i))\\
                   &=\bigcup_{i=0}^{b}(\chi_1(D)-i)=\chi_{b+1}(D).
\end{align*}
\end{proof}
For any $[n,k]$-linear code $C$ over $\mathbb{F}_q$, it is easy to know that $$1\leq d_{1}^{\,r}(C)\leq d_{2}^{\,r}(C)\leq \cdots\leq d_{n}^{\,r}(C)= n.$$ Then we give an improvement of this inequalities in the next theorem.
%By Theorem~\ref{pair monotonicity}, we know that every row of the matrix $d(C)$ is increasing from left to right. And we going to prove that every column of the matrix $d(C)$ is increasing from up to down in the next theorem.

\begin{Theorem}\label{pair monotonicity  1}
Let $C$ be an $[n,k]$-linear code over $\mathbb{F}_q$. Let $1\leq r \leq k$. Then
\begin{description}
             \item[(a)] For $1\leq b\leq n-1$, if $d_{b+1}^{\,r}(C)<n$, then $d_{b}^{\,r}(C)< d_{b+1}^{\,r}(C)$.
             \item[(b)] $1\leq d_{1}^{\,r}(C)<\cdots<d_{k-r}^{\,r}(C)\leq d_{k-r+1}^{\,r}(C)\leq \cdots\leq d_{n}^{\,r}(C)= n.$
             \item[(c)] For $1\leq b\leq n-1$, $d_{b+1}^{\,r}(C)=d_{b}^{\,r}(C)+1$ if and only if there exists an $\mathbb{F}_q$-subspace $E$ of $C$ such that $\dim(E)=r$, $d_{b}^{\,r}(C)=w_{b}(E)<n$ and $\chi_{b}(E)$ is successive.
             \item[(d)] If there exists an $\mathbb{F}_q$-subspace $E$ of $C$ such that $\dim(E)=r$, $d_{1}^{\,r}(C)=w_{1}(E)$ and $\chi_{1}(E)$ is successive, then $d_{b}^{\,r}(C)=\min\{d_{1}^{\,r}(C)+b-1,n\}$.
\end{description}
\end{Theorem}

\begin{proof}
(a) For $1\leq b\leq n-1$, there exists an $\mathbb{F}_q$-subspace $E$ of $C$ such that $\dim(E)=r$ and $w_{b+1}(E)=d_{b+1}^{\,r}(C)$. By Lemma~\ref{relationship-1}, we know

\begin{align}\label{4.5.1}
  w_{b}(E) & =w_{1}(E)+\sum_{H\in\mathbb{H}(X),\,|H|\leq b-1}|H|+\sum_{H\in\mathbb{H}(X),\,|H|\ge b}(b-1)\notag\\
   & =w_{1}(E)+\sum_{H\in\mathbb{H}(X),\,|H|\leq b-1}|H|+\sum_{H\in\mathbb{H}(X),\,|H|= b}(b-1)+\sum_{H\in\mathbb{H}(X),\,|H|\ge b+1}(b-1)
\end{align}
and
\begin{align}\label{4.5.2}
  w_{b+1}(E) & =w_{1}(E)+\sum_{H\in\mathbb{H}(X),\,|H|\leq b}|H|+\sum_{H\in\mathbb{H}(X),\,|H|\ge b+1}b \notag\\
   & =w_{1}(E)+\sum_{H\in\mathbb{H}(X),\,|H|\leq b-1}|H|+\sum_{H\in\mathbb{H}(X),\,|H|=b}|H|+\sum_{H\in\mathbb{H}(X),\,|H|\ge b+1}b\notag\\
   & =w_{1}(E)+\sum_{H\in\mathbb{H}(X),\,|H|\leq b-1}|H|+\sum_{H\in\mathbb{H}(X),\,|H|=b}b+\sum_{H\in\mathbb{H}(X),\,|H|\ge b+1}b\\
   &\ge w_{b}(E) \notag,
\end{align}
where $X=\chi_1(E)$. If $d_{b+1}^{\,r}(C)<n$, then there exists $H\in \mathbb{H}(X)$ and $|H|\ge b+1$ otherwise $n=w_{b+1}(E)=d_{b+1}^{\,r}(C)$ by Lemma~\ref{relationship-1}. Then $$d_{b+1}^{\,r}(C)= w_{b+1}(E)> w_{b}(E)\ge d_{b}^{\,r}(C)$$ by Equation~\ref{4.5.1} and Equation~\ref{4.5.2}.

(b) By Theorem~\ref{pair monotonicity} (b), we know that $d_{k-r}^{\,r}(C)<n$. By (a), we get $$d_{1}^{\,r}(C)<d_{2}^{\,r}(C)<\cdots<d_{k-r}^{\,r}(C).$$

(c) Suppose $d_{b}^{\,r}(C)+1=d_{b+1}^{\,r}(C)$.

If $d_{b}^{\,r}(C)+1=d_{b+1}^{\,r}(C)=n$, then there exists an $\mathbb{F}_q$-subspace $E$ of $C$ such that $\dim(E)=r$ and $d_{b}^{\,r}(C)=w_{b}(E)=n-1$. Hence $\chi_{b}(E)$ is successive.

If $d_{b}^{\,r}(C)+1=d_{b+1}^{\,r}(C)<n$, then there exists an $\mathbb{F}_q$-subspace $E_1$ of $C$ such that $\dim(E_1)=r$ and $d_{b+1}^{\,r}(C)=w_{b+1}(E_1)<n$. We have
$$d_{b}^{\,r}(C)\leq w_{b}(E_1)<w_{b+1}(E_1)=d_{b+1}^{\,r}(C)$$ and
\begin{equation}\label{3.9a}
 w_{b}(E_1)+1=w_{b+1}(E_1)
\end{equation}
since $w_{b}(E_1)\leq w_{b+1}(E_1)<n$ and Lemma~\ref{n8.1} (c).
Hence $d_{b}^{\,r}(C)= w_{b}(E_1)$ and $\chi_{b}(E_1)$ is successive by Equation~\ref{3.9a} and Lemma~\ref{n8.1} (c).

Suppose there exists an $\mathbb{F}_q$-subspace $E_2$ of $C$ such that $\dim(E_2)=r$, $$d_{b}^{\,r}(C)=w_{b}(E_2)<n$$ and $\chi_{b}(E_2)$ is successive.
 By Lemma~\ref{n8.1} (c), we know
$$d_{b+1}^{\,r}(C)\leq w_{b+1}(E_2)=w_{b}(E_2)+1=d_{b}^{\,r}(C)+1$$
and $d_{b}^{\,r}(C)+1=d_{b+1}^{\,r}(C)$ by (a).

(d) We prove (d) by induction.
It is a trivial case when $b=1$.

Now suppose $2\leq b\leq n$, we have $$d_{b-1}^{\,r}(C)=\min\{d_{1}^{\,r}(C)+b-2,n\}$$ by induction.
If $d_{b-1}^{\,r}(C)=n$, then $d_{b}^{\,r}(C)=n=\min\{d_{1}^{\,r}(C)+b-1,n\}$.

If $d_{b-1}^{\,r}(C)=d_{1}^{\,r}(C)+b-2<n$, we know $\chi_{b-1}(E)$ is successive and $$w_{b-1}(E)=w_1(E)+b-2=d_{1}^{\,r}(C)+b-2=d_{b-1}^{\,r}(C)<n$$ since $\chi_{1}(E)$ is successive and $\chi_{b-1}(E)=\chi_{1}(E)[-(b-1)]$ by Lemma~\ref{n8.1} (c). By (c), we get $$d_{b}^{\,r}(C)=d_{b-1}^{\,r}(C)+1=\min\{d_{1}^{\,r}(C)+b-1,n\}.$$
\end{proof}

For two real number $n\times k$ matrixes $A=(a_{ij})_{n\times k}$ and $B=(b_{ij})_{n\times k}$, we assume $A\leq B$ when $a_{ij}\leq b_{ij}$ for any $1\leq i\leq n$ and $1\leq j\leq k$. Let
  $$D(n,k)=\left(\begin{array}{ccccc}
                       n-k+1 & n-k+2&\cdots&n-1&n\\
                       n-k+2 & n-k+3&\cdots&n&n\\
                       \vdots&\ddots&\vdots&\vdots&\vdots\\
                        n&n&\cdots&n&n\\
                        n&n&\cdots&n&n\\
                        \vdots&\vdots&\vdots&\vdots&\vdots\\
                        n&n&\cdots&n&n
\end{array}\right)_{n\times k} $$ for $ k\leq n \in \mathbb{N}^{+}$,
then we have the following theorem.

\begin{Theorem}\label{GBW}
Assume the notations given above. For any $[n,k]$-linear code $C$ over $\mathbb{F}_q$,
\begin{description}
  \item[(a)] Every row of the matrix $D(C)$ is increasing from left to right. And every column of the matrix $D(C)$ is increasing from up to down.
  %\item[(b)] For any $1\leq i\leq n-1$ and $\leq j\leq k-1$. If $d_{i+1}^{\,j}(C)<d_{i+1}^{\,j+1}(C)$, then $d_{i}^{\,j}(C)<d_{i}^{\,j+1}(C)$.
  %\item[(c)] For any $1\leq i\leq n-1$ and $\leq j\leq k-1$. If $d_{i+1}^{\,j}(C)<d_{i+1}^{\,j+1}(C)$, then $d_{i}^{\,j}(C)<d_{i}^{\,j+1}(C)$.
  \item[(b) (Singleton-like bound for the generalized weight matrix)] $D(C)\leq D(n,k)$.
  \item[(c)] For $1\leq b \leq n$, $C$ is a $b$-symbol MDS if and only if the $(b,1)$-element of $D(C)$ is same as $(b,1)$-element of $D(n,k)$ if and only if the $b$th row of $D(C)$ is same as $b$th row of $D(n,k)$.
   \item[(d)]  Let $b_0=\min\{1\leq b \leq n\,|\,C$ is a $b$-symbol MDS code $\}$, then $C$ is a $b$-symbol MDS for any $b_0\leq b \leq n$.
   \item[(e)]  In particular, $C$ is a $1$-symbol MDS if and only if $D(C)=D(n,k)$.
\end{description}
\end{Theorem}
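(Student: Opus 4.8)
The plan is to read the whole theorem off the two monotonicity results already established, Theorem~\ref{pair monotonicity} (monotonicity in the dimension $r$) and Theorem~\ref{pair monotonicity  1} (monotonicity in $b$), together with Definition~\ref{bmds}. First I would record that the $(i,j)$-entry of the target matrix $D$ is exactly $\min\{n-k+i+j-1,\,n\}$, so that its $(b,1)$-entry is $\min\{n+b-k,\,n\}$, matching the $b$-MDS condition verbatim. With this formula in hand, parts (a) and (b) become immediate restatements. The rows of $D(C)$ are nondecreasing because $d_b^{\,1}(C)\le\cdots\le d_b^{\,k}(C)$ (Theorem~\ref{pair monotonicity}(b)), the columns are nondecreasing because $d_1^{\,r}(C)\le\cdots\le d_n^{\,r}(C)$ (Theorem~\ref{pair monotonicity  1}(b)), and the entrywise inequality $D(C)\le D$ is precisely the bounds $d_b^{\,r}(C)\le n+b+r-k-1$ (for $r\le k-b$) and $d_b^{\,r}(C)\le n$ otherwise from Theorem~\ref{pair monotonicity}(c), padded by the trivial bound $d_b^{\,r}(C)\le n$ in the range $b\ge k$.

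For (c), the first equivalence is nothing but Definition~\ref{bmds}, since the $(b,1)$-entries of $D(C)$ and $D$ are $d_b^{\,1}(C)$ and $\min\{n+b-k,n\}$. The substance is that agreement in this single entry forces agreement of the whole $b$-th row. The plan here is a squeezing argument: if $b<k$ and $d_b^{\,1}(C)=n-k+b$, then the strictly increasing integer chain $d_b^{\,1}(C)<d_b^{\,2}(C)<\cdots<d_b^{\,k-b}(C)$ from Theorem~\ref{pair monotonicity}(b) must advance by at least one at each step, giving the lower bound $d_b^{\,r}(C)\ge n-k+b+(r-1)$, which already meets the Singleton ceiling of (b) and so is an equality for $1\le r\le k-b$; in particular $d_b^{\,k-b}(C)=n-1$, whence strict increase forces $d_b^{\,k-b+1}(C)=n$ and row-monotonicity fills the remaining entries with $n$. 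The case $b\ge k$ is easier, since $d_b^{\,1}(C)=n$ collapses the whole row to $n$ by the row being nondecreasing and capped by $n$.

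For (d), I would first observe that $b_0$ is well defined because $C$ is always $n$-MDS: as $w_n(\mathbf{c})=n$ for every nonzero codeword, $d_n^{\,1}(C)=n=\min\{2n-k,n\}$. It then suffices to prove the single implication ``$C$ is $b$-MDS $\Rightarrow$ $C$ is $(b+1)$-MDS'' for $b\le n-1$ and induct upward from $b_0$. The argument mirrors (c): use $d_{b+1}^{\,1}(C)\le\min\{n+b+1-k,n\}$ from (b) as the ceiling, then split on whether $d_{b+1}^{\,1}(C)$ equals $n$; when it is below $n$, Theorem~\ref{pair monotonicity  1}(a) supplies the strict step $d_b^{\,1}(C)<d_{b+1}^{\,1}(C)$ which, squeezed against the ceiling, forces $d_{b+1}^{\,1}(C)=\min\{n+b+1-k,n\}$, and when it equals $n$ the ceiling forces $b+1\ge k$, so the value $n$ is again correct. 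Finally (e) is immediate from (c) and (d): if $D(C)=D$ then the $(1,1)$-entries agree, so $C$ is $1$-MDS, and conversely $1$-MDS gives $b_0=1$, whence (d) makes $C$ a $b$-MDS code for every $b$ and (c) makes every row of $D(C)$ coincide with the corresponding row of $D$.

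I expect the main obstacle to be the propagation steps in (c) and (d), that is, turning the single equality $d_b^{\,1}(C)=\min\{n+b-k,n\}$ into the full row (respectively into the next row). Both hinge on the same integrality-plus-strict-monotonicity device: a strictly increasing integer sequence that starts at the Singleton value and is capped by the Singleton ceiling can advance only by exactly one at each step, so every intermediate term is pinned. The care needed is in correctly partitioning the boundary cases $b<k$, $b=k$, and $b>k$ (equivalently, whether the relevant $\min$ is realized by $n$ or by the linear expression) and in invoking the strict-increase hypotheses of Theorem~\ref{pair monotonicity}(a) and Theorem~\ref{pair monotonicity  1}(a) only when the relevant weight is genuinely below $n$.
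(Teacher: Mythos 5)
Your proposal is correct and follows essentially the same route as the paper: everything is read off the strict monotonicity statements of Theorem~\ref{pair monotonicity} and Theorem~\ref{pair monotonicity  1} together with the Singleton ceilings, with integrality-plus-strict-increase squeezes pinning the entries, and with the same induction from $b_0$ upward for (d). In fact your write-up supplies details the paper leaves implicit (notably the propagation from the $(b,1)$-entry to the whole $b$-th row in (c), which the paper dispatches with a bare citation), but the underlying argument is the same.
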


\begin{proof}
(a) It is easy to prove.

%(b) If $d_{i}^{\,j+1}(C)<n$, then $d_{i}^{\,j}(C)<d_{i}^{\,j+1}(C)$ by Theorem~\ref{pair monotonicity} (b).

%If $d_{i}^{\,j+1}(C)=n$, then $d_{i+1}^{\,j+1}(C)=n$. By the definition of $d_{i+1}^{\,j}(C)$, there exists a subspace $E$ of $C$ such that $\dim(E)=j$ and $w_{i+1}(E)=d_{i+1}^{\,j}(C)$. Then $$w_{i}(E) =w_{1}(E)+\sum_{H\in\mathbb{H}(X),|H|\leq i-1}|H|+\sum_{H\in\mathbb{H}(X),|H|\ge i}(i-1)$$
%and
%$$w_{i+1}(E)=w_{1}(E)+\sum_{H\in\mathbb{H}(X),|H|\leq i}|H|+\sum_{H\in\mathbb{H}(X),|H|\ge i+1}i$$ by Equation~\ref{4.5.1} and Equation~\ref{4.5.2}, where $X=\chi_1(E)$. Since $w_{i+1}(E)=d_{i+1}^{\,j}(C)<n$, we know that there exists an $H\in \mathbb{H}(X)$ such that $|H|\ge i+1$. Hence $$d_{i}^{\,j}(C)\leq w_{i}(E)<n=d_{i}^{\,j+1}(C).$$

Statements (b) and (c) have been proved in Theorem~\ref{pair monotonicity} (c).

(d) First $\{1\leq b \leq n\,|\,C$ is a $b$-symbol MDS code$\}$ is not empty set since $C$ is an $n$-symbol MDS code. Then we only need to prove (d) when $b_0<n$ and $b=b_0+1$.

If $d_{b}^{\,1}(C)=n$, then $C$ is a $b$-symbol MDS. If $d_{b}^{\,1}(C)<n$, then $$n+b_0-k=d_{b_0}^{\,1}(C)<d_{b}^{\,1}(C)\leq n+b-k$$ by Theorem~\ref{pair monotonicity  1} (b). By $b=b_0+1$, we have $d_{b}^{\,1}(C)= n+b-k$ and $C$ is a $b$-symbol MDS code.

(e) It is easy to prove by (c) and (d).
\end{proof}

\section{Generalized weight matrices of two classes of codes}
In this section, we calculate the generalized weight matrix $D(C)$ defined in Section 2, when $C$ is simplex codes or two especial Hamming codes.
 First we assume $$\mathbb{F}_q=\{\alpha_{0}=0,\alpha_1=1,\alpha_{2},\cdots,\alpha_{q-1}\}$$ and give an order on $\mathbb{F}_q$ which is $$\alpha_{0}\leq \alpha_1\leq \alpha_{2}\leq \cdots\leq \alpha_{q-1}.$$

Let  $\mathbf{x},\, \mathbf{y}\in \mathbb{F}_q^k$, we define an order on $\mathbb{F}_q^k$ by using lexicographical order as follows: Two vectors  $\mathbf{x},\, \mathbf{y}$ are called ordered, denoted by  $\mathbf{x}\leq \mathbf{y}$,  if and only if $$\mathbf{x}=(x_0,x_1,\cdots,x_{k-1}),\,\,\mathbf{y}=(y_0,y_1,\cdots,y_{k-1})$$ such that there exists $0\leq i_0\leq k-1$ such that $x_j=y_j$ for any $0\leq j\leq i_0-1$ and $x_{i_0} < y_{i_0}$ ( which means $x_{i_0} \neq y_{i_0}$ and $x_{i_0} \leq y_{i_0}$ ). And $\mathbf{x}< \mathbf{y}$ means $\mathbf{x}\leq \mathbf{y}$ and $\mathbf{x}\neq \mathbf{y}$.

  Recall ${\rm {\rm PG}}^{1}(\mathbb{F}_q^{k})=\{V_{1}^1,V_{2}^1,\cdots,V_{n_{1,k}}^1 \}$ be the set of all subspaces of dimension $1$ of $\mathbb{F}_q^k$.  Then there exists a unique $\mathbf{v}_{i}\in V_{i}$ such that the first non zero component of $\mathbf{v}_{i}$ is $1$ for any $1\leq i\leq n_{1,k}$.

  For $k\ge 1$, let $H_{q,k}=(\mathbf{x}^{T}_{1},\mathbf{x}^{T}_{2},\cdots,\mathbf{x}^{T}_{n_{1,k}})$ be the $k\times n_{1,k}$ matrix over $\mathbb{F}_q$ such that $\mathbf{x}_{i}\in \{\mathbf{v}_{i}\,|\,1\leq i \leq n_{1,k}\}$ for $1\leq i\leq n_{1,k}$ and $\mathbf{x}_{i}<\mathbf{x}_{i+1}$ for $1\leq i\leq n_{1,k}-1$, then the linear code over $\mathbb{F}_q $ with the generator matrix $H_{q,k}$ is called {\it simplex code} donated by $\textbf{S}_{q,k}$ and the linear code over $\mathbb{F}_q $ with the parity check matrix $H_{q,k}$ is called {\it Hamming code} donated by $\textbf{H}_{q,k}$.

\begin{lem} \label{dual Hamming}[Theorem 3 of \cite{W}]
Let $C$ be an $[n,k]$-linear code over $\mathbb{F}_q$. Then
$$\{d_{1}^{\,j}(C)\,|\,1\leq j\leq k\}=\{1,2,\cdots,n\}\setminus\{n+1-d_{1}^{\,j}(C^{\perp})\,|\,1\leq j\leq n-k\}.$$
\end{lem}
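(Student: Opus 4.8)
The plan is to prove this as the profile-duality theorem of Wei, organized around the \emph{dimension/length profile} of $C$ rather than around the codewords directly. For a coordinate set $S\subseteq\mathbb{Z}_n$, write $C(S)=\{\mathbf{c}\in C\,|\,\chi_1(\mathbf{c})\subseteq S\}$ for the subcode of codewords supported inside $S$, and set $m_C(s)=\max\{\dim C(S)\,|\,S\subseteq\mathbb{Z}_n,\,|S|=s\}$ for $0\le s\le n$. The first step is to record the reformulation $d_1^{\,r}(C)=\min\{s\,|\,m_C(s)\ge r\}$: indeed any $r$-dimensional $D\le C$ satisfies $D\subseteq C(\chi_1(D))$ with $|\chi_1(D)|=w_1(D)$, so $m_C(w_1(D))\ge r$; conversely an $r$-dimensional subspace of any $C(S)$ has support inside $S$, so its $1$-weight is at most $|S|$. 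Matching the two inequalities gives the reformulation.

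Next I would establish that $m_C$ is a \emph{staircase}. It is non-decreasing, $m_C(0)=0$ and $m_C(n)=k$; and $m_C(s)\le m_C(s-1)+1$, since deleting one coordinate $i$ from a set $S$ achieving $m_C(s)$ imposes a single linear condition, whence $\dim C(S\setminus\{i\})\ge\dim C(S)-1$. Thus $m_C$ climbs from $0$ to $k$ in steps of $0$ or $1$, attains every intermediate value, and $d_1^{\,r}(C)$ is exactly the position $s$ at which $m_C$ jumps from $r-1$ to $r$. Consequently the weight hierarchy $\{d_1^{\,r}(C)\,|\,1\le r\le k\}$ is precisely the set of $k$ jump positions (recovering the strict monotonicity of Theorem~\ref{pair monotonicity} when $b=1$), and its complement in $\{1,\dots,n\}$, namely $N_C=\{s\in\{1,\dots,n\}\,|\,m_C(s)=m_C(s-1)\}$, consists of exactly the $n-k$ non-jump positions.

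The engine of the proof is a duality identity linking the profiles of $C$ and $C^{\perp}$. Fixing a generator matrix $G=(G_0,\dots,G_{n-1})$ of $C$ and writing $G_{\bar S}$ for its submatrix of columns indexed by $\bar S=\mathbb{Z}_n\setminus S$, a codeword $\mathbf{y}G$ lies in $C(S)$ iff $\mathbf{y}G_{\bar S}=\mathbf{0}$, so $\dim C(S)=k-\mathrm{rank}(G_{\bar S})$; and since $G$ is a parity-check matrix of $C^{\perp}$, a vector supported in $\bar S$ lies in $C^{\perp}$ iff it lies in the kernel of $G_{\bar S}$, so $\dim C^{\perp}(\bar S)=(n-|S|)-\mathrm{rank}(G_{\bar S})$. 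Subtracting cancels the rank term:
$$\dim C(S)=\dim C^{\perp}(\bar S)+|S|-(n-k).$$
Taking the maximum over $|S|=s$ (equivalently $|\bar S|=n-s$) yields
$$m_C(s)=m_{C^{\perp}}(n-s)+s-(n-k).$$

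Finally I would feed this identity into the staircase picture. A direct computation gives
$$m_C(s)-m_C(s-1)=1-\bigl(m_{C^{\perp}}(n+1-s)-m_{C^{\perp}}(n-s)\bigr),$$
so $s$ is a non-jump of $m_C$ exactly when $m_{C^{\perp}}$ jumps at $n+1-s$, i.e. when $n+1-s\in\{d_1^{\,j}(C^{\perp})\,|\,1\le j\le n-k\}$. Hence $N_C=\{\,n+1-d_1^{\,j}(C^{\perp})\,|\,1\le j\le n-k\,\}$, and combining this with $N_C=\{1,\dots,n\}\setminus\{d_1^{\,r}(C)\,|\,1\le r\le k\}$ gives the claimed equality. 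I expect the main obstacle to lie not in the matrix identity, which is a clean one-line rank cancellation, but in the staircase bookkeeping: one must verify the unit-step property of $m_C$ carefully and track the reindexing $s\leftrightarrow n+1-s$ so that the jumps of $C^{\perp}$ align with the non-jumps of $C$ with exactly the offset $n+1$.
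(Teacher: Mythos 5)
Your proof is correct and complete, but note that the paper itself gives no proof of this lemma at all: it is quoted verbatim as Theorem 3 of \cite{W}, so the only thing to compare against is Wei's original argument. Your route is genuinely different from Wei's. You organize everything around the profile $m_C(s)=\max\{\dim C(S)\,:\,|S|=s\}$ (the dimension/length-profile approach later popularized by Forney): the reformulation $d_1^{\,r}(C)=\min\{s\,|\,m_C(s)\ge r\}$, the unit-step staircase property, the rank cancellation $\dim C(S)=\dim C^{\perp}(\bar S)+|S|-(n-k)$, and the resulting identity $m_C(s)=m_{C^{\perp}}(n-s)+s-(n-k)$ are all sound, and the final bookkeeping correctly matches non-jumps of $m_C$ at $s$ with jumps of $m_{C^{\perp}}$ at $n+1-s$. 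Wei instead proves that the two sets $\{d_1^{\,r}(C)\}$ and $\{n+1-d_1^{\,j}(C^{\perp})\}$ are disjoint (by a contradiction argument on supports of subcodes) and then concludes by counting, since their cardinalities are $k$ and $n-k$; that proof leans on the strict monotonicity theorem as an external input, whereas your staircase picture delivers monotonicity and duality simultaneously from one object. The trade-off is that Wei's argument needs no auxiliary profile function, while yours packages the combinatorics into $m_C$ and reduces the duality to a one-line rank identity; both are standard and rigorous, and your write-up has no gaps.
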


%By using Lemma~\ref{dual Hamming}, we can calculate generalized Hamming weights of simplex codes and Hamming codes in following lemma.

 For $k\ge 1$, let $F_{q,k}=(\mathbf{x}^{T}_{1},\mathbf{x}^{T}_{2},\cdots,\mathbf{x}^{T}_{q^k})$ be the $k\times q^k$ matrix over $\mathbb{F}_q$ such that $\mathbf{x}_{i}\in \mathbb{F}_q^k$ for $1\leq i\leq n_{1,k}$ and $\mathbf{x}_{i}<\mathbf{x}_{i+1}$ for $1\leq i\leq q^k$. By using this notion, we have the following theorem.

%\begin{lem} \label{dual SH}
%Assume the notations given above.
%\begin{description}
  %\item[(a)] $H_{q,k}=\left(\begin{array}{cc}
  %                    0 &1\\
%
 %                       H_{q,k-1} &F_{q,k-1}
%\end{array}\right) $ for $k\ge 2$.

  %\item[(a)] Let $S_{q,k}$ be the simplex code over $\mathbb{F}_q$ with the generator matrix $H_{q,k}$, for $k\ge 2$, then $$S_{q,k}=[S_{q,k-1},\mathbb{F}_q^{k-1}]\cdot \left(\begin{array}{cc}
       %               0 &1\\
%
%                        1 &1
%\end{array}\right).$$

%\end{description}
%\end{lem}

%\begin{proof}
%(a) It is easy to prove by the definition of $H_{q,k}$.

%\end{proof}

\begin{Theorem}\label{6.33}
Assume the notations given above.
\begin{description}
  \item[(a)] $H_{q,k}=\left(\begin{array}{cc}
                      0 &1\\

                        H_{q,k-1} &F_{q,k-1}
\end{array}\right) $ for $k\ge 2$.

  \item[(b)] Let $\textbf{S}_{q,k}$ be the simplex code over $\mathbb{F}_q$ with the generator matrix $H_{q,k}$ for $k\ge 1$, then
  $$d_{i}^{\,j}(\textbf{S}_{q,k})=\min\{\frac{q^k-q^{k-j}}{q-1}+i-1,n\}$$ for $1\leq j\leq k$ and $1\leq i\leq n$.

\end{description}
\end{Theorem}

\begin{proof}
(a) It is easy to prove by the definition of $H_{q,k}$.

(b) Since $d_{1}^{\,1}(\textbf{S}_{q,k})=w_1(\mathbf{c})$ for any nonzero $\mathbf{c}\in \textbf{S}_{q,k}$, we know that $$d_{1}^{\,j}(\textbf{S}_{q,k})=w_1(E)$$ for any $E\in {\rm PG}^{j}(\textbf{S}_{q,k})$ by Lemma 1 of \cite{FL}. Let $V^j$ be the $\mathbb{F}_q$-subspace of $C$ generated by first $j$ rows of the matrix $H_{q,k}$, then
$$d_{1}^{\,j}(\textbf{S}_{q,k})=w_1(V^j)=q^{k-1}+q^{k-2}+\cdots+q^{k-j})=\frac{q^k-q^{k-j}}{q-1}$$ for $1\leq j\leq k$ by (a). By Theorem~\ref{pair monotonicity  1} (d), we have $$d_{i}^{\,j}(\textbf{S}_{q,k})=\min\{\frac{q^k-q^{k-j}}{q-1}+i-1,n\}$$ for $1\leq j\leq k$ and $1\leq i\leq n$.

\end{proof}

\begin{Example}
Let $$H_{2,3}=\left(\begin{array}{ccccccc}
                       0&0&0&1&1&1&1\\
                       0&1&1&0&0&1&1\\
                       1&0&1&0&1&0&1
\end{array}\right)_{7\times 3}$$ and let $C_{2,3}$ be the $[7,4]$-Hamming code over $\mathbb{F}_2$ with the parity check matrix $H_{2,3}$.
Then $$G_{2,3}=\left(\begin{array}{ccccccc}
                       1&1&1&0&0&0&0\\
                       1&0&0&1&1&0&0\\
                       1&0&0&0&0&1&1\\
                       0&1&0&1&0&1&0
\end{array}\right)_{7\times 4}$$ is a generator matrix of $C_{2,3}$.

Then the first row of $D(C_{2,3}^{\bot})$ is $(4\,6\,7)$ by Theorem \ref{6.33} and the first row of $D(C_{2,3})$ is $(3\,5\,6\,7)$ by Corollary 3 and Corollary 4 of \cite{W}.
By Theorem~\ref{pair monotonicity  1} (a) (d) and Theorem~\ref{GBW} (a), we have $$D(C_{2,3}^{\bot})=\left(\begin{array}{ccc}
                       4&6&7\\
                       5&7&7\\
                       6&7&7\\
                       7&7&7\\
                       7&7&7\\
                       7&7&7\\
                       7&7&7
\end{array}\right)_{3\times 7},\,\,D(C_{2,3})=\left(\begin{array}{cccc}
                       3&5&6&7\\
                       4&6&7&7\\
                       5&7&7&7\\
                       6&7&7&7\\
                       7&7&7&7\\
                       7&7&7&7\\
                       7&7&7&7
\end{array}\right)_{4\times 7}   .$$
\end{Example}

\begin{Example}
Let $$H_{2,4}=\left(\begin{array}{ccccccccccccccc}
                       0&0&0&0&0&0&0&1&1&1&1&1&1&1&1\\
                       0&0&0&1&1&1&1&0&0&0&0&1&1&1&1\\
                       0&1&1&0&0&1&1&0&0&1&1&0&0&1&1\\
                       1&0&1&0&1&0&1&0&1&0&1&0&1&0&1
\end{array}\right)_{4\times 15} .$$ We denote the $i$th row vector of $H_{2,4}$ by $\alpha_i$ for $1\leq i\leq 4$.

Let $$A=\left(\begin{array}{ccccccccccccccc}
                       1&1&1&0&0&0&0&0&0&0&0&0&0&0&0\\
                       1&0&0&1&1&0&0&0&0&0&0&0&0&0&0\\
                       1&0&0&0&0&1&1&0&0&0&0&0&0&0&0\\
                       1&0&0&0&0&0&0&1&1&0&0&0&0&0&0\\
                       1&0&0&0&0&0&0&0&0&1&1&0&0&0&0\\
                       1&0&0&0&0&0&0&0&0&0&0&1&1&0&0\\
                       1&0&0&0&0&0&0&0&0&0&0&0&0&1&1\\
                       0&1&0&1&0&1&0&0&0&0&0&0&0&0&0\\
                       0&1&0&0&1&0&1&0&0&0&0&0&0&0&0\\
                       0&1&0&0&0&0&0&1&0&1&0&0&0&0&0\\
                       0&1&0&0&0&0&0&0&1&0&1&0&0&0&0\\
                       0&1&0&0&0&0&0&0&0&0&0&1&0&1&0\\
                       0&1&0&0&0&0&0&0&0&0&0&0&1&0&1
\end{array}\right)_{13\times 15}$$ and we denote the $i$th row vector of $A$ by $\beta_i$ for $1\leq i\leq 13$.
Let $C_{2,4}$ be the $[15,11]$-Hamming code over $\mathbb{F}_2$ with the parity check matrix $H_{2,4}$, then $\beta_i\in C_{2,4}$ for $1\leq i\leq 13$. Then the first row of $D(C_{2,4}^{\bot})$ is $$(8\,12\,14\,15)$$ by Theorem \ref{6.33}, and the first row of $D(C_{2,4})$ is $$(3\,5\,6\,7\,9\,10\,11\,12\,13\,14\,15)$$ by Corollaries 3 and 4 of \cite{W}.
By Theorems~\ref{pair monotonicity} (a), \ref{pair monotonicity  1} (a) and (d), and Theorem~\ref{GBW} (a), we have $$D(C_{2,4}^{\bot})=\left(\begin{array}{cccc}
                       8&12&14&15\\
                       9&13&15&15\\
                       10&14&15&15\\
                       11&15&15&15\\
                       12&15&15&15\\
                       13&15&15&15\\
                       14&15&15&15\\
                       15&15&15&15\\
                       15&15&15&15\\
                       15&15&15&15\\
                       15&15&15&15\\
                       15&15&15&15\\
                       15&15&15&15\\
                       15&15&15&15\\
                       15&15&15&15
\end{array}\right)_{15\times 4},\,\,D(C_{2,4})=\left(\begin{array}{ccccccccccc}
                       3&5&6&7&9&10&11&12&13&14&15\\
                       4&6&7&8&10&11&12&13&14&15&15\\
                       5&7&8&9&11&12&13&14&15&15&15\\
                       6&8&9&10&12&13&14&15&15&15&15\\
                       7&9&10&11&13&14&15&15&15&15&15\\
                       8&10&11&12&14&15&15&15&15&15&15\\
                       9&11&12&13&15&15&15&15&15&15&15\\
                       10&12&13&14&15&15&15&15&15&15&15\\
                       11&13&14&15&15&15&15&15&15&15&15\\
                       12&14&15&15&15&15&15&15&15&15&15\\
                       13&15&15&15&15&15&15&15&15&15&15\\
                       14&15&15&15&15&15&15&15&15&15&15\\
                       15&15&15&15&15&15&15&15&15&15&15\\
                       15&15&15&15&15&15&15&15&15&15&15\\
                       15&15&15&15&15&15&15&15&15&15&15

\end{array}\right)_{15\times 11}   .$$
In fact, we know that the first column of $D(C_{2,4}^{\bot})$ is $$(8\,9\,10\,11\,12\,13\,14\,15\,15\,15\,15\,15\,15\,15\,15)^T$$
since there exists an $\mathbb{F}_q$-subspace $E_1$ of $C_{2,4}^{\bot}$ such that $\dim(E_1)=1$, $d_{b}^{\,1}(C)=w_{b}(E_1)<n$ and $\chi_{1}(E_1)$ is successive, where $E_1=\langle\alpha_1\rangle$. Analogously, we can calculate the second column of $D(C_{2,4}^{\bot})$ by $E_2=\langle\alpha_1,\,\alpha_2\rangle$ and Theorem~\ref{pair monotonicity  1} (a) (d).

And analogously calculating the first column of $D(C_{2,4})$ is by $D_1=\langle\beta_1\rangle$.

Calculating the second column of $D(C_{2,4})$ is by $D_2=\langle\beta_1,\beta_2\rangle$.

Calculating the 3-th column of $D(C_{2,4})$ is by $D_3=\langle\beta_1,\beta_2,\beta_8\rangle$.

Calculating the 4-th column of $D(C_{2,4})$ is by $D_4=\langle\beta_1,\beta_2,\beta_3,\beta_8\rangle$.

And the calculation of the rest column of $D(C_{2,4})$ is by using Theorem~\ref{pair monotonicity} (a) and Theorem~\ref{pair monotonicity  1} (a).
\end{Example}

\begin{Remark}
By Corollary A.2 of \cite{LP}, it is easy to get the utility performance of simplex codes and Hamming codes we have calculated in symbol-pair read wire-tap channels of type II.
\end{Remark}

\section{$b$-symbol MDS codes}
In \cite{DZ}, the authors gave a sufficient condition for the existence of $b$-symbol MDS codes by using parity check matrices of linear codes. And they \cite{DZ}  constructed $b$-symbol MDS codes by using this condition. In this section, we give a necessary and sufficient condition for a linear code to be a $b$-symbol MDS code by using the generator matrix and  the parity check matrix of this linear code,  respectively.

Recall that  we have assumed $G=(G_{0},\cdots,G_{n-1})$ is a generator matrix of an $[n,k]$-linear code $C$ over $\mathbb{F}_q$.
Now we take all $i$th columns of $G$ such that $i\in J[b]$ and put them together to form a submatrix of $G$, which is denoted by $[G_j\,|\,j\in J[b]\,]$.

\begin{Theorem}\label{7.1}
Assume the notations given above. An $[n,k]$-linear code $C$ over $\mathbb{F}_q$ is a $b$-symbol MDS code if and only if $rank([G_j\,|\,j\in J[b]\,])=k$ for any $J\subseteq \mathbb{Z}_n$ such that $$|J|= \max\{k-b,0\}+1.$$
\end{Theorem}

\begin{proof}
It is enough to prove that an $[n,k]$-linear code $C$ over $\mathbb{F}_q$ is a $b$-symbol MDS code if and only if $rank([G_j\,|\,j\in J[b]\,])=k$ for any $J\subseteq \mathbb{Z}_n$ such that $|J|\ge \max\{k-b,0\}+1$.

Suppose $C$ is not a $b$-symbol MDS code. There exists a nonzero $\mathbf{c}_0\in C$ such that $$w_b(\mathbf{c}_0)\leq \min\{n-k+b,n\}-1$$ and a nonzero $\mathbf{y}_0\in \mathbb{F}_q^k$ such that $\mathbf{c}_0=\mathbf{y}_0G$. Let $J_0=\mathbb{Z}_n\setminus \chi_b(\mathbf{c}_0)$, then $$|J_0|=n-w_b(\mathbf{c}_0)\ge \max\{k-b,0\}+1.$$
By Lemma ~\ref{n8.1} (a), we have $J_0[b]\subseteq \mathbb{Z}_n\setminus \chi_1(\mathbf{c}_0)$ and $\mathbf{y}_0[G_j\,|\,j\in J_0[b]\,]=\mathbf{0}$. Hence $$rank([G_j\,|\,j\in J_0[b]\,])\leq k-1$$ which is a contradiction.

Assume there exists a subset $J_1\subseteq \mathbb{Z}_n$ such that $|J_1|\ge \max\{k-b,0\}+1$ and $$rank([G_j\,|\,j\in J_1[b]\,])\leq k-1.$$
Then there exists a nonzero $\mathbf{y}_1\in \mathbb{F}_q^k$ such that $\mathbf{y}_1[G_j\,|\,j\in J_1[b]\,]=\mathbf{0}$.
Assume $\mathbf{c}_1=\mathbf{y}_1G$, then $J_1[b]\subseteq \mathbb{Z}_n\setminus \chi_1(\mathbf{c}_1)$ and $J_1\subseteq \mathbb{Z}_n\setminus \chi_b(\mathbf{c}_1)$ by Lemma \ref{n8.1} (a). Hence $$n-w_b(\mathbf{c}_1)=|\mathbb{Z}_n\setminus \chi_b(\mathbf{c}_1)|\ge |J_1|\ge \max\{k-b,0\}+1$$ and $$w_b(\mathbf{c}_1)\leq \min\{n-k+b,n\}-1 .$$ That is a contradiction, since $C$ is a $b$-symbol MDS code.
\end{proof}

When $b=1$, we get the usually necessary and sufficient condition for a linear code to be a $1$-symbol MDS code (Hamming MDS code) by using generator matrixes.
\begin{Corollary}\label{5.3a}
Assume the notations given above. An $[n,k]$-linear code $C$ over $\mathbb{F}_q$ is a $1$-symbol MDS code if and only if $rank([G_j\,|\,j\in J\,])=k$ for any $J\subseteq \mathbb{Z}_n$ such that $|J|= k$.
\end{Corollary}

\begin{Corollary}\label{5.3b}
Assume the notations given above and $b\ge k$. An $[n,k]$-linear code $C$ over $\mathbb{F}_q$ is a $b$-symbol MDS code if and only if $rank([G_j\,|\,i\leq j\leq i+b-1 ])=k$ for any $0\leq i\leq n-1$.
\end{Corollary}

Given two positive integers $b$ and $k$ such that $b\ge k\ge 1$, we can construct an $[n,k]$-linear code $C$ over $\mathbb{F}_q$ such that $C$ is a $b$-symbol MDS code and $n$ is as large as possible in the following example.
\begin{Example}\label{4.4}
Given two positive integers $b$ and $k$ such that $b\ge k$, there exists a $k\times b$ matrix $\tilde{G}_1$ over $\mathbb{F}_q$ such that $rank(\tilde{G}_1)=k$. For any $t\in\mathbb{N}^{+}$, we construct a $[tb,k]$-linear code $C_t$ over $\mathbb{F}_q$ with a generator matrix $\tilde{G}_t=[\tilde{G}_1,\tilde{G}_1,\cdots,\tilde{G}_1]$, where $\tilde{G}_1$ repeats $t$ times in $\tilde{G}_{t}$. By Corollary~\ref{5.3b}, we know that the linear code $C_t$ is a $b$-symbol MDS code such that the $b$-symbol weight of any nonzero codeword is $n$.
\end{Example}

Given two positive integers $b$ and $k$ such that $1\leq b\leq  k-1$, we give a bound of length $n$ of $[n,k]$-linear codes which are $b$-symbol MDS codes in the following corollary.
\begin{Corollary}\label{5.3c}
Assume the notations given above and $1\leq b\leq k-1$. For any $[n,k]$-linear code $C$ over $\mathbb{F}_q$ which is a $b$-symbol MDS code, then $n\leq n_{1,k}$.
\end{Corollary}

\begin{proof}
We only need prove that $n\leq n_{1,k}$ for any $[n,k]$-linear code $C$ over $\mathbb{F}_q$ which is a $(k-1)$-symbol MDS code, since any $[n,k]$-linear code $C$ over $\mathbb{F}_q$ which is a $b$-symbol MDS code for $1\leq b\leq k-1$ is a $(k-1)$-symbol MDS code by Theorem~\ref{GBW} (d).

By Theorem~\ref{7.1}, we know an $[n,k]$-linear code $C$ over $\mathbb{F}_q$ is a $(k-1)$-symbol MDS code if and only if $rank([G_j\,|\,j\in J[k-1]\,])=k$ for any $J\subseteq \mathbb{Z}_n$ such that $|J|=2.$ For any $0\leq j \leq n-1$, there is a $\mathbb{F}_q$-subspace $V_j$ of $\mathbb{F}_q^k$ such that $\dim(V_j)=k-1$ and $$\{G_j,G_{j+1},\cdots,G_{j+k-2}\}\subseteq V_j.$$

Suppose $n>n_{1,k}=n_{k-1,k}$ which is the number of all $\mathbb{F}_q$-subspaces of dimension $k-1$ of $\mathbb{F}_q^k$, then there exists $j_1$ and $j_2$ such that $$0\leq j_1<j_2 \leq n-1$$ and $V_{j_1}=V_{j_2}$. Let $J_1=\{j_1,j_2\}$, then $rank([G_j\,|\,j\in J_1[k-1]\,])<k$ which is a contradiction.
\end{proof}

%\begin{Remark}
%When $b=1$, MDS conjecture speculate that $n\leq q+1=n_{1,b+1}$ for any $[n,k]$-linear $1$-MDS code over $\mathbb{F}_{q}$ except $q$ is even and $k=3$ or $k=q-1$. When $b=k-1$, we prove $n\leq n_{1,b+1}$ for any $[n,k]$-linear $b$-MDS code over $\mathbb{F}_{q}$ in Corollary~\ref{5.3c}. Hence it is bold to conjecture that $n\leq n_{1,b+1}$ for any $[n,k]$-linear $b$-MDS code over $\mathbb{F}_{q}$ and $1\leq b \leq k $, except some special cases, for example $q$ is even and $k=3$ or $k=q-1$.
%\end{Remark}

We assume that $H=(H_{0},\cdots,H_{n-1})$ is a parity check matrix of an $[n,k]$-linear code $C$ over $\mathbb{F}_q$. Then we take all $i$th column of $H$ such that $i\in J$ and put them together to form a submatrix of $H$, which is  denoted by $[H_j\,|\,j\in J\,]$.
\begin{Theorem}\label{7.2}
Assume the notations given above. An $[n,k]$-linear code $C$ over $\mathbb{F}_q$ is a $b$-symbol MDS code if and only if $rank([H_j\,|\,j\in J\,])=|J|$ for any $J\subseteq \mathbb{Z}_n$ such that $$|J[-b]|\leq \min\{n-k+b,n\}-1.$$
\end{Theorem}

\begin{proof}
(a) Suppose $C$ is not a $b$-symbol MDS code. There exist a nonzero $\mathbf{c}\in C$ such that $$w_b(\mathbf{c})\leq \min\{n-k+b,n\}-1.$$
Let $J_0=\chi_1(\mathbf{c})$. By Lemma ~\ref{n8.1} (b), we have $$|J_0[-b]|=|\chi_1(\mathbf{c})[-b]|=|\chi_b(\mathbf{c})|=w_b(\mathbf{c})\leq \min\{n-k+b,n\}-1.$$
Since $H\mathbf{c}^{T}=0$, we have $\sum_{j\in J_0} H_{j}c_j=0$ where $\mathbf{c}=(c_0,c_1,\cdots,c_{n-1})$. Then $$rank([H_j\,|\,j\in J_0\,])<|J_0|$$ which is a contradiction.

If there exists a $J_1\subseteq \mathbb{Z}_n$ such that $|J_1[-b]|\leq \min\{n-k+b,n\}-1$ and $$rank([H_j\,|\,j\in J_1\,])<|J_1|.$$
Then there exists a codeword $\mathbf{x}=(x_0,x_1,\cdots,x_{n-1})\in C$ such that $x_j=0$ for any $j\in \mathbb{Z}_n\backslash J_1$ and $$\sum_{j\in J_1} H_{j}x_j=0.$$
By Lemma ~\ref{n8.1} (b), we have $\chi_1(\mathbf{x})\subseteq J_1$ and $\chi_b(\mathbf{x})=\chi_1(\mathbf{x})[-b]\subseteq J_1[-b]$.
Hence $$w_b(\mathbf{x})=|\chi_b(\mathbf{x})|\leq |J_1[-b]|\leq \min\{n-k+b,n\}-1$$ that is a contradiction, since $C$ is a $b$-symbol MDS.
\end{proof}

When $b=1$, we get the usually necessary and sufficient condition for a linear code to be a $1$-symbol MDS code by using parity check matrixes.
\begin{Corollary}\label{5.5a}
Assume the notations given above. An $[n,k]$-linear code $C$ over $\mathbb{F}_q$ is a $1$-symbol MDS code if and only if $rank([H_j\,|\,j\in J\,])=n-k$ for any $J\subseteq \mathbb{Z}_n$ such that $|J|=n-k$.
\end{Corollary}

By Corollaries~\ref{5.3a} and \ref{5.5a}, an $[n,k]$-linear code $C$ over $\mathbb{F}_q$ is a $1$-symbol MDS code if and only if the dual $C^{\perp}$ of $C$ is a $1$-symbol MDS code. But the dual $C^{\perp}$ may not be a $b$-symbol MDS code, when $b\ge 2$ and $C$ is a $b$-symbol MDS code. And we know this by the following example.

\begin{Example}
Let $C$ be the linear code over $\mathbb{F}_2$ with a generator matrix $\left(\begin{array}{ccc}
                       1 & 0&1

\end{array}\right) $. Then we know $d_2^{\,1}(C)=3$ and $C$ is a $2$-symbol MDS code. And $C^\perp$ is the linear code over $\mathbb{F}_2$ with a generator matrix $\left(\begin{array}{ccc}
                       1 & 0&1\\
                       0 & 1&0
\end{array}\right) $. Then $d_2^{\,1}(C^\perp)=2$ and $C^\perp$ is not a $2$-symbol MDS code.

\end{Example}

\section{Linear isomorphisms preserving $b$-symbol weights}
MacWilliams \cite{M} and later Bogart, Goldberg, and Gordon
\cite{BGG} proved that every linear isomorphism preserving Hamming weights between two linear codes over finite fields can be induced by a monomial matrix. Unfortunately, a linear isomorphism induced by a permutation matrix may not preserve $b$-symbol weights of linear codes. In this section,  we  obtain a necessary and sufficient condition for a linear isomorphism preserving $b$-symbol weights between two linear codes over finite fields.

Recall that $n_{r, k}$ is the number of all subspaces of dimension $r$ of a vector space of dimension $k$.
Let ${\rm {\rm PG}}^{r}(\mathbb{F}_q^{k})=\{V_{1}^r,V_{2}^r,\cdots,V_{n_{r,k}}^r \}$ be the set of all subspaces of dimension $r$ of $\mathbb{F}_q^k$. There is a bijection between ${\rm {\rm PG}}^{k-r}(\mathbb{F}_q^{k})$ and ${\rm {\rm PG}}^{r}(\mathbb{F}_q^{k})$, which is defined by
$$
{\rm {\rm PG}}^{k-r}(\mathbb{F}_q^{k})\to {\rm {\rm PG}}^{r}(\mathbb{F}_q^{k}), V^{k-r}\mapsto(V^{k-r})^\bot,\,\, \forall\,\,  V^{k-r}\in {\rm {\rm PG}}^{k-r}(\mathbb{F}_q^{k}).
$$
Hence $n_{r, k}=n_{k-r, k}$. For convenience, if $\frac{k}{2}<r\leq k$, we assume $${\rm {\rm PG}}^{r}(\mathbb{F}_q^{k})=\{V_{1}^{r}=(V_{1}^{k-r})^\bot,V_{2}^{r}=(V_{2}^{k-r})^\bot,\cdots,V_{n_{r, k}}^{r}=(V_{n_{r, k}}^{k-r})^\bot \}.
$$

Let $T_{r,s}$ be an $n_{r,k}\times n_{s,k}$ matrix over the rational number field $\mathbb{Q}$ such that
$$T_{r,s}=(t_{ij})_{n_{r,k}\times n_{s,k}},\,\,\,\,\,t_{ij} = \left\{ \begin{array}{ll}
1,  & \textrm{if $V_i^r\subseteq V_j^s ;$}\\
0,  & \textrm{if $V_i^r\nsubseteq V_j^s ,$}
\end{array} \right.  .$$
 And let $J_{m\times n}$ be the $m\times n$ matrix with all entries being $1$, i.e, $J_{m\times n}=\left(\begin{array}{cccc}
                       1 & \cdots&1\\
                       \vdots&\ddots&\vdots\\
                        1&\cdots&1
\end{array}\right) $.
%In particular, $J_{1\times n}={\bf 1}=(1,\cdots, 1)$.
The following lemma can be found in \cite{LP}.

\begin{lem} \label{T1}
Assume the notations given above, and $1\leq r\leq s\leq z\leq k$. Then
\begin{description}
  \item[(a)] The sum of all rows of $T_{r,s}$ is a constant row vector $\mathbf{t}=n_{r,s}{\bf 1}$.

  \item[(b)] The matrix $T_{1,k-1}$ is an invertible matrix and $T_{1,k-1}^{-1}=\frac{1}{q^{k-2}}(T_{1,k-1}-\frac{q^{k-2}-1}{q^{k-1}-1}J_{n_{1,k}\times n_{1,k}})$ for $k\ge 2$. The sum of all rows of $T_{1,k-1}^{-1}$ is a constant row vector.

  \item[(c)] $T_{r,k-1}T_{1,k-1}=(q^{k-r-1})T_{1,r}^{T}+\frac{q^{k-r-1}-1}{q-1}J_{n_{r,k}\times n_{1,k}}$ and $T_{r,k-1}T_{1,k-1}^{-1}=\frac{1}{q^{r-1}}T_{1,r}^{T}-\frac{q^{r-1}-1}{q^{r-1}(q^{k-1}-1)}J_{n_{r, k}\times n_{1, k}}$ for $k\ge r+1$.

  \item[(d)] $T_{r,s}T_{s,z}=n_{s-r,z-r}T_{r,z}$ for $1\leq r\leq s\leq z\leq k$.
\end{description}
\end{lem}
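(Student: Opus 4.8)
The plan is to reduce every matrix entry to a subspace-counting problem, and then, for the algebraic identities in (b) and (c), to verify that the resulting Gaussian-binomial expressions cancel. Throughout I keep the indexing convention fixed before the lemma, namely $V_i^{k-1}=(V_i^1)^\perp$, so that a hyperplane and the point giving rise to it carry the same index; this is what makes products such as $T_{1,k-1}^2$, $T_{1,k-1}J$ and $T_{r,k-1}T_{1,k-1}$ well defined and interpretable through orthogonality. Part (a) comes first and for free: the $j$-th entry of the row-sum vector $\mathbf{1}\,T_{r,s}$ equals $\sum_i t_{ij}=|\{V_i^r\mid V_i^r\subseteq V_j^s\}|$, the number of $r$-dimensional subspaces of the fixed $s$-dimensional space $V_j^s$, which is $n_{r,s}$ independently of $j$; hence $\mathbf{1}\,T_{r,s}=n_{r,s}\mathbf{1}$.

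Part (d) is the structural heart, so I would settle it next. The $(i,l)$ entry of $T_{r,s}T_{s,z}$ counts the $s$-dimensional subspaces $W$ with $V_i^r\subseteq W\subseteq V_l^z$. If $V_i^r\nsubseteq V_l^z$ this is $0$, matching $(T_{r,z})_{il}=0$; if $V_i^r\subseteq V_l^z$, the intermediate subspaces correspond, under the quotient $V_l^z/V_i^r\cong\mathbb{F}_q^{\,z-r}$, to the $(s-r)$-dimensional subspaces of a $(z-r)$-dimensional space, so the count is $n_{s-r,z-r}=n_{s-r,z-r}(T_{r,z})_{il}$. Both cases give $T_{r,s}T_{s,z}=n_{s-r,z-r}T_{r,z}$.

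For (b) I would compute $T_{1,k-1}^2$ entrywise. Using $V_j^{k-1}=(V_j^1)^\perp$, its $(i,l)$ entry counts the points $V_j^1$ lying in $(V_i^1)^\perp\cap(V_l^1)^\perp=(V_i^1+V_l^1)^\perp$: when $i=l$ this is a hyperplane and the count is $n_{1,k-1}$, while when $i\ne l$ the space $V_i^1+V_l^1$ is $2$-dimensional and the count is $n_{1,k-2}$. Since $n_{1,k-1}-n_{1,k-2}=q^{k-2}$, this yields $T_{1,k-1}^2=q^{k-2}I+n_{1,k-2}J$; combined with $T_{1,k-1}J=n_{1,k-1}J$ (each point lies in $n_{1,k-1}$ hyperplanes), multiplying $T_{1,k-1}$ by the claimed right-hand side and checking that the $J$-coefficient vanishes because $\frac{q^{k-2}-1}{q^{k-1}-1}\,n_{1,k-1}=n_{1,k-2}$ returns $I$. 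The row-sum claim then follows from $\mathbf{1}\,T_{1,k-1}=n_{1,k-1}\mathbf{1}$ and $\mathbf{1}\,J=n_{1,k}\mathbf{1}$.

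Finally (c): the first identity is again an entrywise count. The $(i,l)$ entry of $T_{r,k-1}T_{1,k-1}$ counts the points $V_j^1\subseteq(V_i^r)^\perp\cap(V_l^1)^\perp=(V_i^r+V_l^1)^\perp$, which is $n_{1,k-r}$ if $V_l^1\subseteq V_i^r$ and $n_{1,k-r-1}$ otherwise; since $(T_{1,r}^{T})_{il}=1$ exactly when $V_l^1\subseteq V_i^r$ and $q^{k-r-1}+\frac{q^{k-r-1}-1}{q-1}=n_{1,k-r}$, this matches $q^{k-r-1}T_{1,r}^{T}+\frac{q^{k-r-1}-1}{q-1}J$. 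For the second identity I would not recompute but substitute the inverse from (b): writing $T_{r,k-1}T_{1,k-1}^{-1}=\frac{1}{q^{k-2}}\bigl(T_{r,k-1}T_{1,k-1}-\tfrac{q^{k-2}-1}{q^{k-1}-1}T_{r,k-1}J\bigr)$, inserting the first identity together with $T_{r,k-1}J=n_{1,k-r}J$ (each $r$-subspace lies in $n_{1,k-r}$ hyperplanes), the $T_{1,r}^{T}$-coefficient becomes $\frac{q^{k-r-1}}{q^{k-2}}=\frac{1}{q^{r-1}}$ and the $J$-coefficient collapses to $-\frac{q^{r-1}-1}{q^{r-1}(q^{k-1}-1)}$. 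I expect the main obstacle to be bookkeeping rather than ideas: keeping the duality indexing consistent so the three products are assembled with the correct row/column identification, and pushing the Gaussian-binomial cancellations through; in particular the $J$-coefficient in (c) hinges on the polynomial identity $(q^{k-r-1}-1)(q^{k-1}-1)-(q^{k-2}-1)(q^{k-r}-1)=(q-1)(q^{k-r-1}-q^{k-2})$, which is exactly where a sign or exponent slip is most likely.
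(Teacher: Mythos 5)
Your proof is correct in all four parts: the entrywise double-counting arguments for (a), (d), and the first identity of (c) are right; the computation $T_{1,k-1}^2=q^{k-2}I+n_{1,k-2}J_{n_{1,k}\times n_{1,k}}$ together with $T_{1,k-1}J=n_{1,k-1}J$ and the cancellation $\frac{q^{k-2}-1}{q^{k-1}-1}\,n_{1,k-1}=n_{1,k-2}$ does verify (b); and the substitution route for the second identity of (c), using $T_{r,k-1}J=n_{1,k-r}J$, goes through (the polynomial identity you flagged, $(q^{k-r-1}-1)(q^{k-1}-1)-(q^{k-2}-1)(q^{k-r}-1)=(q-1)(q^{k-r-1}-q^{k-2})$, checks out). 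Note that the paper itself gives no proof of this lemma, deferring to Lemma 4.1 of \cite{LP}, so there is no in-paper argument to diverge from; your double-counting-plus-duality proof is the natural one and fills that gap. The one step worth making explicit is the symmetry $V_i^1\subseteq (V_j^1)^\perp \iff V_j^1\subseteq (V_i^1)^\perp$ (valid because the Euclidean form is symmetric), which is what converts hyperplane incidences into the intersection counts; it silently relies on the paper's indexing convention $V_j^{k-1}=(V_j^1)^\perp$, which is in force for $k\ge 3$ (for $k=2$ every assertion is trivial since $T_{1,1}=I$).
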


Let $\varphi$ be an $\mathbb{F}_q$-linear isomorphism from $C$ to $\tilde{C}$, where $C$ and $\tilde{C}$ are two $[n,k]$-linear codes over $\mathbb{F}_q$. Let $G=\left(\begin{array}{c}
                       {\bf g}_{1}\\
                        \cdots \\
                        {\bf g}_{k}

\end{array}\right)$ be a generator matrix of $C$ for some ${\bf g}_{i}\in \mathbb{F}_q^{n}$. Then $\tilde{G}=\left(\begin{array}{c}
                      \varphi( {\bf g}_{1})\\
                        \cdots \\
                        \varphi({\bf g}_{k})

\end{array}\right)$ is a generator matrix of $\tilde{C}$.
%Then we give a necessary and sufficient conditions of $w_b(\mathbf{c})=w_b(\varphi(\mathbf{c}))$ for any $\mathbf{c}\in C$.

\begin{Theorem}\label{6.1}
Assume the notations given above. Then $w_b(\mathbf{c})-w_b(\varphi(\mathbf{c}))$ is constant for any nonzero $\mathbf{c}\in C$ if and only if $\sum_{V\in \Omega_i  }\frac{1}{|V|}(m^b_G(V)-m^b_{\tilde{G}}(V))$ is constant for any $1\leq i \leq n_{1,k}$, where $s=\min\{b,k-1\}$ and $\Omega_i=\{V\in {\rm PG}^{\leq s}(\mathbb{F}_q^k)\,|\,V^1_i\subseteq V\}$.

\end{Theorem}

\begin{proof}
Let $\phi$ be the $\mathbb{F}_q$-linear isomorphism from $\mathbb{F}_q^{k}$ to $ C$ such that $\phi(\mathbf{y})=\mathbf{y}G$ for any $\mathbf{y}\in \mathbb{F}_q^{k}$.
And let $\tilde{\phi}$ be the $\mathbb{F}_q$-linear isomorphism from $\mathbb{F}_q^{k}$ to $ \tilde{C}$ such that $\tilde{\phi}(\mathbf{y})=\mathbf{y}\tilde{G}$ for any $\mathbf{y}\in \mathbb{F}_q^{k}$.
Then $\tilde{\phi}= \varphi\phi$ by the definition of $\tilde{G}$.
For any nonzero $\mathbf{c}\in C$, there is a $\mathbf{y}$ such $\mathbf{c}=\phi(\mathbf{y})$ and $\tilde{\mathbf{c}}=\varphi(\mathbf{c})=\tilde{\phi}(\mathbf{y})$. By Lemma~\ref{r pair weight}, we have
\begin{equation}\label{6.c}
  w_{b}(\mathbf{c})= n-\theta^b_{G}(\langle\mathbf{y}\rangle^{\bot})
\end{equation}
 and \begin{equation}\label{6.d}
  w_{b}(\tilde{\mathbf{c}})= n-\theta^b_{\tilde{G}}(\langle\mathbf{y}\rangle^{\bot}).
\end{equation}

Let $\Delta_{r}=(m^b_{G}(V^r_{1}),m^b_{G}(V^r_{2}),\cdots,m^b_{G}(V^r_{n_{r,k}}))$, $$\tilde{\Delta}_{r}=(m^b_{\tilde{G}}(V^r_{1}),m^b_{\tilde{G}}(V^r_{2}),\cdots,m^b_{\tilde{G}}(V^r_{n_{r,k}}))$$ for $0\leq r \leq \min\{b,k-1\}$ and $\Gamma_{k-1}=(\theta^b_{G}(V^{k-1}_{1}),\theta^b_{G}(V^{k-1}_{2}),\cdots,\theta^b_{G}(V^{k-1}_{n_{1,k-1}})),$ $$\tilde{\Gamma}_{k-1}=(\theta^b_{\tilde{G}}(V^{k-1}_{1}),\theta^b_{\tilde{G}}(V^{k-1}_{2}),\cdots,\theta^b_{\tilde{G}}(V^{k-1}_{n_{1,k-1}})).$$ Assume $s=\min\{b,k-1\}$, then we get
\begin{equation}\label{6.a}
  \Gamma_{k-1}=\sum_{r=0}^{s}\Delta_{r}T_{r,k-1}=m^b_G(\mathbf{0})\textbf{1}+\sum_{r=1}^{s}\Delta_{r}T_{r,k-1}
\end{equation}
and
\begin{equation}\label{6.b}
  \tilde{\Gamma}_{k-1}=\sum_{r=0}^{s}\tilde{\Delta}_{r}T_{r,k-1}=m^b_{\tilde{G}}(\mathbf{0})\textbf{1}+\sum_{r=1}^{s}\tilde{\Delta}_{r}T_{r,k-1}
\end{equation}
by the definition of $\theta^b_{G}$.

Suppose $a=w_b(\mathbf{c})-w_b(\tilde{\mathbf{c}})$ for any nonzero $\mathbf{c}\in C$. By Equation~\ref{6.c} and Equation~\ref{6.d}, we have $\theta^b_{G}(\langle\mathbf{y}\rangle^{\bot})-\theta^b_{\tilde{G}}(\langle\mathbf{y}\rangle^{\bot})=-a$ for any nonzero $\mathbf{y}\in \mathbb{F}_q^{k}$ and $$\Gamma_{k-1}-\tilde{\Gamma}_{k-1}=-a\mathbf{1}.$$

By Equation~\ref{6.a} and Equation~\ref{6.b}, we have $$\sum_{r=1}^{s}(\Delta_{r}-\tilde{\Delta}_{r})T_{r,k-1}=(m^b_{\tilde{G}}(\mathbf{0})-m^b_G(\mathbf{0})-a)\textbf{1}$$ and $$\sum_{r=1}^{s}(\Delta_{r}-\tilde{\Delta}_{r})T_{r,k-1}T_{1,k-1}^{-1}=\frac{m^b_{\tilde{G}}(\mathbf{0})-m^b_G(\mathbf{0})-a}{n_{1,k-1}}\mathbf{1}.$$
Then we have $$q\sum_{V\in \Omega_i  }\frac{1}{|V|}(m^b_G(V)-m^b_{\tilde{G}}(V))-\sum_{r=2}^s\sum_{V^r\in {\rm PG}^r(\mathbb{F}_q^k)}\frac{q^{r-1}-1}{q^{r-1}(q^{k-1}-1)}(m^b_G(V^r)-m^b_{\tilde{G}}(V^r))=\frac{m^b_{\tilde{G}}(\mathbf{0})-m^b_G(\mathbf{0})-a}{n_{1,k-1}},$$
since the element in the $i$th position of the vector $\sum_{r=1}^{s}(\Delta_{r}-\tilde{\Delta}_{r})T_{r,k-1}T_{1,k-1}^{-1}$ is
$$q\sum_{V\in \Omega_i  }\frac{1}{|V|}(m^b_G(V)-m^b_{\tilde{G}}(V))-\sum_{r=2}^s\sum_{V^r\in {\rm PG}^r(\mathbb{F}_q^k)}\frac{q^{r-1}-1}{q^{r-1}(q^{k-1}-1)}(m^b_G(V^r)-m^b_{\tilde{G}}(V^r))$$

by Lemma~\ref{T1} (c), where $\Omega_i=\{V\in {\rm PG}^{\leq s}(\mathbb{F}_q^k)\,|\,V^1_i\subseteq V\}$.
Hence $$q\sum_{V\in \Omega_i  }\frac{1}{|V|}(m^b_G(V)-m^b_{\tilde{G}}(V))=\sum_{r=2}^s\sum_{V^r\in {\rm PG}^r(\mathbb{F}_q^k)}\frac{q^{r-1}-1}{q^{r-1}(q^{k-1}-1)}(m^b_G(V^r)-m^b_{\tilde{G}}(V^r))+\frac{m^b_{\tilde{G}}(\mathbf{0})-m^b_G(\mathbf{0})-a}{n_{1,k-1}}$$
and $\sum_{V\in \Omega_i  }\frac{1}{|V|}(m^b_G(V)-m^b_{\tilde{G}}(V))$ is constant for any $1\leq i \leq n_{1,k}$.

Suppose $\sum_{V\in \Omega_i  }\frac{1}{|V|}(m^b_G(V)-m^b_{\tilde{G}}(V))=b$ for any $1\leq i \leq n_{1,k}$.
Then $$\sum_{r=1}^{s}(\Delta_{r}-\tilde{\Delta}_{r})T_{r,k-1}T_{1,k-1}^{-1}$$ and $\sum_{r=1}^{s}(\Delta_{r}-\tilde{\Delta}_{r})T_{r,k-1}$ are constant vectors by Lemma~\ref{T1} (b),
since the element in the $i$th position of the vector $\sum_{r=1}^{s}(\Delta_{r}-\tilde{\Delta}_{r})T_{r,k-1}T_{1,k-1}^{-1}$ is
$$q\sum_{V\in \Omega_i  }\frac{1}{|V|}(m^b_G(V)-m^b_{\tilde{G}}(V))-\sum_{r=2}^s\sum_{V^r\in {\rm PG}^r(\mathbb{F}_q^k)}\frac{q^{r-1}-1}{q^{r-1}(q^{k-1}-1)}(m^b_G(V^r)-m^b_{\tilde{G}}(V^r))$$
$$=qb-\sum_{r=2}^s\sum_{V^r\in {\rm PG}^r(\mathbb{F}_q^k)}\frac{q^{r-1}-1}{q^{r-1}(q^{k-1}-1)}(m^b_G(V^r)-m^b_{\tilde{G}}(V^r)).$$
By Equation~\ref{6.a} and Equation~\ref{6.b}, we know $$\Gamma_{k-1}-\tilde{\Gamma}_{k-1}=\sum_{r=1}^{s}(\Delta_{r}-\tilde{\Delta}_{r})T_{r,k-1}+(m^b_{\tilde{G}}(\mathbf{0})-m^b_G(\mathbf{0}))\textbf{1}$$ is a constant vector and $\theta^b_{G}(\langle\mathbf{y}\rangle^{\bot})-\theta^b_{\tilde{G}}(\langle\mathbf{y}\rangle^{\bot})$ is constant for any nonzero $\mathbf{y}\in \mathbb{F}_q^{k}$. Hence $w_b(\mathbf{c})-w_b(\varphi(\mathbf{c}))$ is constant for any nonzero $\mathbf{c}\in C$ by Equation~\ref{6.c} and Equation~\ref{6.d}.
\end{proof}

\begin{Corollary}\label{6.2}
Assume the notations given above. Then $w_b(\mathbf{c})=w_b(\varphi(\mathbf{c}))$ for any $\mathbf{c}\in C$ if and only if there exists $\mathbf{c_0}\in C$ such that $w_b(\mathbf{c_0})=w_b(\varphi(\mathbf{c_0}))$ and $$\sum_{V\in \Omega_i  }\frac{1}{|V|}(m^b_G(V)-m^b_{\tilde{G}}(V))$$ is constant for any $1\leq i \leq n_{1,k}$, where $s=\min\{b,k-1\}$ and $\Omega_i=\{V\in {\rm PG}^{\leq s}(\mathbb{F}_q^k)\,|\,V^1_i\subseteq V\}$.
\end{Corollary}

There is an example for using Corollary~\ref{6.2} to determine a linear isomorphism preserving $b$-symbol weights when $b=2$ in Section 5 of \cite{LP}.
When $b=1$, we obtain the classical MacWilliams extension theorem \cite{M}\cite{BGG} in the next corollary.
\begin{Corollary}\label{6.3}
Assume the notations given above. Then $w_1(\mathbf{c})=w_1(\varphi(\mathbf{c}))$ for any $\mathbf{c}\in C$ if and only if there exists an monomial matrix $M$ such that $\varphi(\mathbf{x})=\mathbf{x}M$ for any $\mathbf{x}\in \mathbb{F}_q^n$.

\end{Corollary}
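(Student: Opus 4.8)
The plan is to deduce this from Corollary~\ref{6.2} (equivalently Theorem~\ref{6.1}) by specializing $b=1$, and then to translate the resulting combinatorial condition on the column spaces of $G$ and $\tilde G$ into the existence of a monomial matrix. The easy implication is immediate: if $\varphi(\mathbf x)=\mathbf x M$ for a monomial $M$, then right multiplication by $M$ permutes the coordinates of any $\mathbf c$ and rescales them by nonzero scalars, so it preserves the number of nonzero entries; hence $w_1(\varphi(\mathbf c))=w_1(\mathbf c)$ for every $\mathbf c\in C$. The work is entirely in the converse.

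For the converse I would first read off the shape of the criterion when $b=1$. Here $s=\min\{1,k-1\}=1$ (assuming $k\ge 2$; the case $k=1$ is checked directly), and for each line $V_i^1\in PG^1(\mathbb F_q^k)$ the set $\Omega_i=\{V\in PG^{\le 1}(\mathbb F_q^k)\mid V_i^1\subseteq V\}$ consists of the single space $V_i^1$ itself, with $|V_i^1|=q$. Thus the condition of Corollary~\ref{6.2} collapses to the statement that $m_G(V)-m_{\tilde G}(V)$ is a constant $c$ as $V$ ranges over all lines in $PG^1(\mathbb F_q^k)$.

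The crux is to upgrade ``constant'' to ``zero'' and to recover the monomial matrix. Since $w_1$ is preserved for all codewords, Lemma~\ref{r pair weight} gives $\theta_G(\langle\mathbf y\rangle^\bot)=\theta_{\tilde G}(\langle\mathbf y\rangle^\bot)$ exactly for every nonzero $\mathbf y$; writing $f(V)=m_G(V)-m_{\tilde G}(V)$ on lines and $d=m_G(\mathbf 0)-m_{\tilde G}(\mathbf 0)$, this is the linear system $f\,T_{1,k-1}=-d\,\mathbf 1$ over all hyperplanes. Invertibility of $T_{1,k-1}$ together with $\mathbf 1\,T_{1,k-1}=n_{1,k-1}\mathbf 1$ from Lemma~\ref{T1}(a),(b) forces $f\equiv c$ with $c=-d/n_{1,k-1}$. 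A second count, $\sum_{V\in PG^1(\mathbb F_q^k)}f(V)=m_{\tilde G}(\mathbf 0)-m_G(\mathbf 0)=-d$, gives $c\,n_{1,k}=c\,n_{1,k-1}$; since $n_{1,k}>n_{1,k-1}$ for $k\ge 2$, this forces $c=0$ and hence $d=0$. Consequently $m_G(V)=m_{\tilde G}(V)$ for every $V\in PG^{\le 1}(\mathbb F_q^k)$, lines and the zero space alike.

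Finally I would convert this equality of column multisets into the matrix $M$: matching, for each line $V$, the $m_G(V)$ columns of $G$ spanning $V$ bijectively with the equally many columns of $\tilde G$ spanning $V$ (each matched pair differing by a nonzero scalar), and matching the $m_G(\mathbf 0)=m_{\tilde G}(\mathbf 0)$ zero columns arbitrarily, produces a permutation $\sigma$ and scalars $\lambda_j\in\mathbb F_q^{*}$ with $\tilde G_{\cdot j}=\lambda_j G_{\cdot\sigma(j)}$, that is $\tilde G=GM$ for a monomial $M$; comparing generator rows then yields $\varphi(\mathbf c)=\mathbf c M$. I expect the main obstacle to be precisely the passage from the constant $c$ to $c=0$ while simultaneously controlling the zero-column counts $m_G(\mathbf 0),m_{\tilde G}(\mathbf 0)$, since Corollary~\ref{6.2} alone only delivers constancy on lines and says nothing a priori about $\mathbf 0$.
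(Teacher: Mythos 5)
Your proposal is correct and takes essentially the same route as the paper: specialize the multiplicity criterion of Corollary~\ref{6.2}/Theorem~\ref{6.1} to $b=1$ (where $\Omega_i=\{V_i^1\}$), conclude $m_G(V)=m_{\tilde{G}}(V)$ for every $V\in PG^{\leq 1}(\mathbb{F}_q^k)$, and then read off the monomial matrix by matching the columns of $G$ and $\tilde{G}$ line by line and matching zero columns arbitrarily. The only difference is that you explicitly justify the passage from ``constant difference'' to ``zero difference'' (via the system $f\,T_{1,k-1}=-d\,\mathbf{1}$, Lemma~\ref{T1}, and the double count over lines versus hyperplanes), a step the paper's proof asserts without detail, so your write-up is in fact more complete than the original.
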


\begin{proof}
When $b=1$,  we have $w_1(\mathbf{c})=w_1(\varphi(\mathbf{c}))$ for any $\mathbf{c}\in C$ if and only if $m^1_G(V_i^1)=m^1_{\tilde{G}}(V_i^1)$ for any $1\leq i\leq n_{1,k}$ by Corollary~\ref{6.2} and $$\sum_{V\in \Omega_i  }\frac{1}{|V|}(m^1_G(V)-m^1_{\tilde{G}}(V))=m^1_G(V_i^1)-m^1_{\tilde{G}}(V_i^1)$$ for any $1\leq i\leq n_{1,k}$.
Hence $w_1(\mathbf{c})=w_1(\varphi(\mathbf{c}))$ for any $\mathbf{c}\in C$ if and only if there exists an monomial matrix $M$ such that $\varphi(\mathbf{x})=\mathbf{x}M$ for any $\mathbf{x}\in \mathbb{F}_q^n$ by using the definitions of the functions $m^1_{G}$ and $m^1_{\tilde{G}}$.
\end{proof}

From Theorem \ref{6.1}, we know that if we want to determine a linear isomorphism is preserving $b$-symbol weights of linear codes or not, it is crucial to calculate the value  $\sum_{V\in \Omega_i  }\frac{1}{|V|}m^b_G(V)$ for an $[n,k]$-linear code $C$ with a generator matrix $G$, where $s=\min\{b,k-1\}$ and $\Omega_i=\{V\in {\rm {\rm PG}}^{\leq s}(\mathbb{F}_q^k)\,|\,V^1_i\subseteq V\}$.

Recall that we have assumed that $G=(G_{0},\cdots,G_{n-1})$ is a generator matrix of an $[n,k]$-linear code $C$ over $\mathbb{F}_q$.
Then we assume $$S_j=\mathbb{F}_qG_{j}+\mathbb{F}_qG_{j+1}+\cdots+\mathbb{F}_qG_{j+b-1}$$ which is a $\mathbb{F}_q$-subspace of $\mathbb{F}_q^k$ and $$\tilde{S}_j=( G_{j},G_{j+1},\cdots,G_{j+b-1})$$ is a $k\times b$ submatrix of $G$ for $0\leq j\leq n-1$.
Also we know that $\dim(S_j)=rank(\tilde{S}_j)$.

\begin{Theorem}\label{611}
Assume $\kappa_{ij}=\left\{ \begin{array}{ll}
1,  & \textrm{if $V^1_{i}\subseteq S_j ;$}\\
0,  & \textrm{if $V^1_{i}\nsubseteq S_j .$}
\end{array} \right.$ for $1\leq i\leq n_{1,k}$ and $1\leq j \leq n$, $\sum_{V\in \Omega_i  }\frac{1}{|V|}m^b_G(V)=\sum_{j=1}^{n}\kappa_{ij}q^{-rank(\tilde{S}_j)}$.
\end{Theorem}

\begin{proof}
It is easy to prove this lemma by using the definition of the function $m^b_{G}$.
\end{proof}

\begin{Remark}\label{5.6}
Let $C$ be an $[n,k]$-linear code over $\mathbb{F}_q$ with a generator matrix $G=(G_{0},\cdots,G_{n-1})$, then we calculate $f_i=\sum_{j=1}^{n}\kappa_{ij}q^{-rank(\tilde{S}_j)}$ for $1\leq i\leq n_{1,k}$ by the following steps.
First we can calculate $\{S_0,S_1,\cdots,S_{n-1}\}$ and $|{\rm {\rm PG}}^1(S_i)|\leq \frac{q^b-1}{q-1}$.
Assume $T=\bigcup_{i=1}^{n}{\rm {\rm PG}}^1(S_i)$, we have $|T|\leq n\frac{q^b-1}{q-1}$.
If $V^1_i\notin T$, then $f_i=0$ by Theorem~\ref{611}. So we only need to calculate $f_i$ for $|T|$ subspaces of one dimension of $\mathbb{F}_q^k$.

However, if we simply check $b$-symbol weights of all the codewords of $C$ and $\tilde{C}$, then we need to calculate $2\cdot\frac{q^k-1}{q-1}$ subspaces of dimension one of $\mathbb{F}_q^k$ for their $b$-symbol weights since $\mathbf{c}$ and $\lambda\mathbf{c}$ have same $b$-symbol weight for $\mathbf{c}\in C$ and $\lambda\in \mathbb{F}_q^*$.
So the determination of a linear isomorphism preserve $b$-symbol weight by using Theorem \ref{6.1} is more efficient, since $2\cdot|T|\leq 2n\cdot \frac{q^b-1}{q-1}<<2\cdot\frac{q^k-1}{q-1}$ when $k>>b$. For example, when $C$ is a $[10,6]$-linear code $C$ over $\mathbb{F}_{31}$ and $b=3$, then $2\cdot|T|\leq 19860$ is much less than $2\cdot\frac{31^6-1}{31-1}=59166912$.

\end{Remark}

\vskip 4mm

\noindent {\bf Acknowledgement.}
This work was supported by NSFC (Grant No. 11871025).

%\newpage

\end{document}